\DeclareMathAlphabet{\pazocal}{OMS}{zplm}{m}{n}
\newcommand{\Ib}{\pazocal{I}}
\newcommand{\Cb}{\pazocal{C}}
\newcommand{\pln}{p^+}
\newcommand{\prn}{p^-}
\newcommand{\plt}{\pln_t}
\newcommand{\prt}{\prn_t}
\newcommand{\pls}{\pln_s}
\newcommand{\prs}{\prn_s}
\newcommand{\lr}{\pm}
\newtheorem{theorem}{Theorem}
\newtheorem{proposition}[theorem]{Proposition}
\newcommand{\customlabel}[2]{%
    \phantomsection
    \protected@edef\@currentlabel{#2}%
    \label{#1}%
}
\newtheorem{lemma}[theorem]{Lemma}
\newtheorem{corollary}[theorem]{Corollary}
\theoremstyle{definition}
\theoremstyle{remark}
\numberwithin{theorem}{section}
\numberwithin{proposition}{section}
\numberwithin{lemma}{section}
\numberwithin{corollary}{section}
\numberwithin{definition}{section}
\numberwithin{remark}{section}
\numberwithin{example}{section}
\newcommand{\E}{\mathbb{E}}
\newcommand{\R}{\mathbb{R}}
\newcommand{\be}{\begin{equation}}
\newcommand{\en}{\end{equation}}
\newcommand{\ben}{\begin{equation*}}
\newcommand{\enn}{\end{equation*}}
\newcommand{\bea}{\begin{eqnarray}}
\newcommand{\ena}{\end{eqnarray}}
\begin{document}
	\newlength\tindent
	\setlength{\tindent}{\parindent}
	\setlength{\parindent}{0pt}
	\renewcommand{\indent}{\hspace*{\tindent}}
	
	\begin{savenotes}
		\title{
			\bf{ 
The additive Bachelier model \\
 with an application to the oil option market \\
 in the Covid period
		}}
		\author{
			Roberto Baviera$^\ddagger$  \&
			Michele Domenico Massaria$^\ddagger$ 
		}
		
		\maketitle
		
		\begin{tabular}{ll}
			$(\ddagger)$ &  Politecnico di Milano, Department of Mathematics, 32 p.zza L. da Vinci, Milano \\
		\end{tabular}
	\end{savenotes}
	
	\vspace{0.5cm}
	
	\begin{abstract}	
		\noindent
In April 2020, the Chicago Mercantile Exchange temporarily switched the pricing formula for West Texas Intermediate oil market options from the Black model to the Bachelier model. In this context, we introduce an additive Bachelier model that provides a simple closed-form solution and a good description of the implied volatility surface.

		\noindent
This new additive model exhibits several notable mathematical and financial properties. It ensures the no-arbitrage condition, a critical requirement in highly volatile markets, while also enabling a parsimonious synthesis of the volatility surface. The model features only three parameters, each with a clear financial interpretation: the volatility term structure, the vol-of-vol, and a parameter for modelling skew.

		\noindent
\textcolor{black}{Model calibration can follow a cascade procedure}: first, it accurately replicates the term structures of forwards and At-The-Money volatilities observed in the market; second, it fits the smile of the volatility surface.\ \textcolor{black}{The proposed model}
also supports efficient pricing of path-dependent exotic options via Monte Carlo simulation, using a straightforward and computationally efficient approach.

		\noindent
Overall, this model provides a robust and parsimonious description of the oil option market during the exceptionally volatile first period of the Covid-19 pandemic.

	\end{abstract}
	
	\vspace*{0.11truein}
	{\bf Keywords}: 
	volatility surface, Bachelier model, additive process,  cascade calibration.
	\vspace*{0.11truein}
	
	{\bf JEL Classification}: 
	C51, 
	G13. 
	
	\vspace{1.5cm}
	\begin{flushleft}
		{\bf Address for correspondence:}\\
		Roberto Baviera\\
		Department of Mathematics \\
		Politecnico di Milano\\
		32 p.zza Leonardo da Vinci \\ 
		I-20133 Milano, Italy \\
		Tel. +39-02-2399 4575\\
		roberto.baviera@polimi.it
	\end{flushleft}
	
\newpage
\begin{center}
\Large\bfseries 
The additive Bachelier model with an application \\
 to the oil option market 
 in the Covid period
\end{center}
\section{Introduction}
In April 2020, the price of oil futures contracts became negative for the first time in history.
For this reason, the Chicago
Mercantile Exchange (CME) and the Intercontinental Exchange (ICE) changed their models for
oil 
from the Black to the Bachelier model for a time window 
\citep[][]{CMEGroup,ICE}.
That period 
corresponds to months when the Covid-19 pandemic spread globally, characterized by a highly turbulent oil market, and in particular for its options due to
uncertainties related to both demand and supply \citep{shaikh2021relation}. 

\smallskip

In this paper, we propose a new model that describes in a simple and parsimonious way all most liquid futures and options in the crude oil market.

There exist not many models that, on the one hand, account for negative prices in oil futures and, on the other, grant a coherent description 
for the whole volatility surface.
The Bachelier model \citep{bachelier1900jeu,davis2006} allows for negative futures prices and reproduces the term structure of At-The-Money (ATM) volatilities without considering any smile. The Bachelier model assumes that underlying prices follow a normal distribution, a significant departure from later models like \cite{black1973pricing}, which assumes a log-normal distribution. Bachelier’s framework has regained relevance in modern financial contexts, particularly in markets where asset prices can take near-zero or negative values, such as interest rate derivatives and certain commodity futures. Among possible extensions, one relevant enhancement replaces standard Brownian motions with additive processes \citep[for a comprehensive description of the properties of these processes see, e.g.,][]{Sato}. These processes, which incorporate jumps and heavy-tailed distributions, improve the model’s flexibility, enabling it to capture more accurately the real-world behaviour of asset prices, especially in volatile markets. L\'evy models --that require also stationarity-- are the most known class among additive processes (see, e.g., \citeauthor{benth2008stochastic} \citeyear{benth2008stochastic} and \citeauthor{Cont} \citeyear{Cont}): unfortunately\textcolor{black}{,} they do not have enough flexibility to offer an adequate description of some commodity markets\textcolor{black}{,} especially in highly volatile periods.

\smallskip

The proposed model extends the Bachelier model to a large class of additive processes, maintaining simple tractability and a coherent description for the whole volatility surface.
In particular, the proposed description offers a very simple calibration \textcolor{black}{procedure}, known as cascade calibration: it reproduces first the most liquid derivatives, considering then the less liquid ones. This technique was first introduced in the interest rates market \citep[see, e.g.,][]{brigo2005empirically}, where several financial products with different liquidity are involved. This elementary calibration scheme can be applied when modelling oil derivatives\textcolor{black}{,} even in the very volatile market of the first period of the Covid-19 pandemic, enabling to reproduce exactly both the term structure of futures and ATM volatility: this is a main result of this study.

The model presents other key characteristics: i) it grants a closed-form solution for plain vanilla
call/put options; ii) it provides a simple analytical expression for the implied volatility both around
the ATM and for large moneynesses, always satisfying the no-arbitrage condition; iii) it describes the whole volatility surface with only two additional parameters being very parsimonious; iv) it enables very fast simulation techniques for path-dependent exotics. 
 

\smallskip

Several are the contributions of this paper. Let us briefly summarize the three most relevant ones.
First, we introduce a new model in the Bachelier framework that is particularly easy to handle analytically. 
Second, we show that --thanks to a separability property of the implied vol-- it is possible to implement a cascade calibration, i.e.\ the model is calibrated
first on the most liquid products (discount factors, ATM forwards), 
then on the ATM options, concluding with the rest of the volatility surface.
Finally, we show an application to the West Texas Intermediate (WTI) oil market for the whole time window when the CME has considered Bachelier as the reference model (some months during the Covid-19 pandemic, when the market was very volatile)\textcolor{black}{,} obtaining excellent results.

\smallskip

The rest of the paper is organized as follows. 
In Section \ref{sec:ModelOverview}, we introduce the additive Bachelier model and its main properties.
In Section \ref{sec:Calibration}, we describe the calibration scheme.
In Section \ref{sec:Results}, we present the dataset and the main calibration results on the WTI oil market. 
Finally, we state our conclusions in Section \ref{sec:Conclusions}. 



\section{The model}
\label{sec:ModelOverview}

In this Section, we present the proposed model with
a detailed description of its several interesting properties. 
The model presents an elementary closed formula in the Fourier space for plain vanilla options (Bachelier-Lewis Formula) 
thanks to its explicit characteristic function.  
Moreover, the proposed model grants a simple description in terms of implied volatility and admits a cascade calibration, 
i.e.\ a calibration first on the most liquid financial products and then on the less liquid ones, a key property of a model among practitioners
\citep[see, e.g.,][]{brigo2005empirically}. 

\bigskip

\textcolor{black}{Let us consider a filtered probability space $(\Omega, \mathcal{F}, \mathbb{F}, \mathbb{P})$, where the filtration $\mathbb{F} = (\mathcal{F}_s)_{s \geq 0}$ is right-continuous and complete.} Let us call $F_s$ the forward valued at time $s$ and expiry at time $t$.

We \textcolor{black}{can} model the forward with
\be
F_s = F_0 + f_s \quad  \forall s \in[0,t] \;\; ,
\label{eq:ForwardDynamics}
\en
\textcolor{black}{
where $\{ f_s \}_{s \ge 0}$ is an $\mathbb{F}$-adapted process, with $f_0=0$,
within a 
class of additive processes, defined by specifying their characteristic exponent as follows
\begin{equation}
\label{eq:ChFunGeneral_s}
	\ln \phi_s(u):= \ln \mathbb{E}_0\left[e^{ i \, u \, f_s}\right]=\psi\left(i \, u \, \eta_s \, {\sigma }_s  \, \sqrt{s}   
+\frac{u^2 }{2} \, {\sigma }_s^2 \, s ; \; k_s, \; \alpha \right) \, + \,
 i\, u \, \eta_s \, {\sigma }_s \, \sqrt{s}     \;\;,
\end{equation}
with $\sigma_s,\,\eta_s,\,k_s$ continuous function of time. We indicate with $\mathbb{E}_0[\bullet]$ the expectation given the information at the value date $t_0=0$.\\
The function $ \psi(u; k, \alpha)$ is defined as
\begin{equation}
\psi (u; k, \alpha) :=
\begin{cases} 
	\displaystyle \frac{1}{k}
	\displaystyle \frac{1-\alpha}{\alpha}
	\left \{1-		\left(1+\frac{u \; k}{1-\alpha}\right)^\alpha \right \} & \mbox{if } \; 0< \alpha < 1 \\[4mm]
	\displaystyle -\frac{1}{k}
	\ln \left(1+u \; k\right)  & \mbox{if } \; \alpha = 0\;\;  \end{cases}
\label{eq:laplaceG}\,\,.
\end{equation}
and it is the Laplace exponent of a tempered stable random variable.\footnote{\textcolor{black}{Although this characteristic exponent resembles that of a subordinated process, it is useful to emphasize that $\{f_s\}_{s\geq0}$ cannot be obtained in general via subordination. The only exception is the specific case where it reduces to a L\'evy process. A similar example is discussed in \cite{azzone2025independent}.}}
The quantity $\alpha \in [0,1)$ is a hyper-parameter;
it selects the chosen model: e.g. $\alpha = 0$ is related to the VG model \citep{MadanSeneta1990} while $\alpha = 1/2$ to the NIG \citep{barndorff1997}.}

\smallskip

\textcolor{black}{An additive process is defined as a càdlàg adapted stochastic process on $\mathbb{R}$, denoted by $\{f_s\}_{s\geq0}$ with $f_0=0$ $\mathbb{P}$-a.s., which is uniquely characterized by the property of independent increments and stochastic continuity \citep[see, e.g.,][Def.14.1]{Cont}. According to Theorem 9.1 in \cite{Sato}, the distribution of an additive process at any time $s$ is infinitely divisible, and it is identified by the generating triplet $(A_s,\gamma_s, \nu_s)$ \citep[see, e.g.,][Def.8.2, p.38]{Sato}. The fundamental relationship between a system of infinitely divisible distributions and the existence of an underlying additive process is established by Theorem 9.8 in \cite{Sato}. This Theorem prescribes two essential sets of conditions on the generating triplet: firstly, monotonicity requirements to ensure the non-negativity of diffusion terms and L\'evy measures for the increments; and secondly, continuity conditions to guarantee the stochastic continuity of the process. By invoking these results, we demonstrate in Appendix \ref{app:Bachelier_formula} that there exists a family of additive processes uniquely defined by the characteristic exponent \eqref{eq:ChFunGeneral_s}.}

\bigskip

\textcolor{black}{
In this paper, we model the forward \eqref{eq:ForwardDynamics} considering a sub-case of the processes in \eqref{eq:ChFunGeneral_s}.}
The underlying process $\{ f_s \}_{s\ge 0}$ has characteristic function exponent
\begin{equation}
\label{eq:ChFun}
	\ln \phi_s(u):= \ln \mathbb{E}_0\left[e^{ i \, u \, f_s}\right]=\psi\left(i \, u \, \eta \,  {\sigma }_s \, \sqrt{ s}
+\frac{u^2 }{2} \, {\sigma }_s^2 \, s ; \; k, \; \alpha \right) \, + \,
 i\, u \, \eta \, {\sigma }_s \, \sqrt{ s}  \;\;,
\end{equation}
with $ \eta \in \R$, $k \in \R^+$ and ${\sigma }_s$ a positive continuous function of time s.t. i) ${\sigma }_s^2 \, s$ is increasing in time and ii) $\sigma_s^2\,s$ goes to zero for $s\rightarrow0^+$.\footnote{The definition \eqref{eq:ChFun} explains 
the reason why, 
we require that the volatility function $\sigma_s$ is strictly positive. 
The case $\sigma_s = 0$ for all
$s\in[0,t]$ 
describes the trivial (and non financial) situation with the 
characteristic function identically equal to $1$, i.e.\ the forward  $F_s$ identically equal to its initial value $F_0$. \textcolor{black}{Let us emphasise that additive processes with characteristic function \eqref{eq:ChFunGeneral_s} include the L\'evy processes as the stationary sub-case, as it is shown in Appendix \ref{app:Bachelier_formula}. Our model is not included in the L\'evy sub-case. We thank one reviewer for addressing this point.}}
We indicate with $\mathbf{p}$ the set of three model parameters: $\sigma_s$, $\eta$ and $k$.\footnote{\textcolor{black}{We would indicate the characteristic function \eqref{eq:ChFun} with $\phi_s(u;\mathbf{p})$, but we keep the dependence on $\mathbf{p}$ implicit when not required for clarity.}}

\smallskip

Let us anticipate that the characteristic function in \eqref{eq:ChFun} is analytic in the horizontal strip with $\Im(u)\in(-\pls,\prs)$ with $p^\pm_s>0$ and finite.\footnote{This result is proven in Proposition \ref{pr:Moment}.}

\bigskip

We call this dynamics for the forward an additive Bachelier process, because it is a natural extension of the
Bachelier model. 

\smallskip

The Bachelier model falls within this class (as shown in Appendix \ref{app:Bachelier_formula});
in this case, $f_s$ is modelled via a Brownian motion $W_s$.
Let us consider the European call option with strike $K$ and maturity $t$; let us call  
$B_0$ the discount factor between value date $t_0=0$ and maturity $t$.
Within the Bachelier model,
the price for the European call with moneyness $x := K - F_0 $ is 
\be
\label{eq:BachelierFormula}
C_b(x, t; \sigma^b_t) = B_0  \,  \E_0 \left[  F_t - K \right]^+  =
B_0  \,   \sqrt{t} \, c_b\left(  \frac{x}{\sqrt{t}}, \sigma^b_t \right) = 
B_0  \,  \sigma^b_t \, \sqrt{t} \, c_b\left(  \frac{x}{\sigma^b_t \, \sqrt{t}}, 1 \right) \, \, ,
\en
where $\sigma^b_t$ is the Bachelier volatility and $c_b$ is a normalized Bachelier call price
\be
\displaystyle c_b(y, \sigma) := - y \, \Phi \left( - \frac{y}{\sigma} \right) + \sigma \, 
\varphi \left( - \frac{y}{\sigma} \right) \,\,,
\label{eq:BasicFormula}
\en
with ${\varphi} ( \bullet )$  and ${\Phi} ( \bullet )$, respectively, the pdf and the cdf of a standard normal rv.
The main properties of this formula are summarised in Appendix \ref{app:Bachelier_formula}.

\bigskip

The following Proposition states that the forward 
(modelled via \eqref{eq:ForwardDynamics} and the characteristic function \eqref{eq:ChFun}) is an additive \textcolor{black}{and martingale} process. \textcolor{black}{The proof of the following Proposition, along with all other proofs in this paper, is provided in Appendix \ref{app:Proof_1}.}

\begin{proposition} 
\label{th:f_Additive}
The process $ \left\{f_t \right\}_{t\geq 0}$ with characteristic function \eqref{eq:ChFun}
 is additive and martingale.
\end{proposition}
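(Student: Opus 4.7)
The plan is to invoke Sato's characterisation of Additive processes \citep[Theorems 9.1 and 9.8]{Sato}: the family of laws with characteristic functions $\{\phi_t\}$ corresponds to an Additive process (in law) iff each marginal is infinitely divisible, $\phi_0\equiv 1$, $t\mapsto\phi_t(u)$ is continuous for every $u$, and for every $s\le t$ the ratio $\phi_t(u)/\phi_s(u)$ is itself an infinitely divisible characteristic function. The key reduction is to set $v_t:=\sigma_t\sqrt{t}$ and observe that $\ln\phi_t(u)=\Psi(uv_t)$, where $\Psi(w):=\psi\bigl(iw\eta+w^2/2;k,\alpha\bigr)+iw\eta$ is the exponent of the CF of a Normal Tempered Stable random variable $X$ (martingale-corrected at $v=1$). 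Hypothesis (i) that $\sigma_t^2 t$ is non-decreasing implies $v_t$ is non-decreasing; hypothesis (ii) gives $v_0=0$, hence $\phi_0\equiv 1$; continuity of $\sigma_t$ transfers to continuity of $t\mapsto\phi_t(u)$.

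Next, I would establish infinite divisibility of each marginal by identifying $\phi_t$ as the CF of a subordinated Brownian motion with drift: since $\psi(\cdot;k,\alpha)$ is the Laplace exponent of a tempered stable subordinator $S$ with unit mean (which I would verify by computing $\psi'(0;k,\alpha)=-1$ directly from \eqref{eq:laplaceG} in both branches), we obtain $f_t\stackrel{d}{=} v_t\bigl(\sqrt{S_1}Z-\eta(S_1-1)\bigr)$ with $Z\sim\mathcal N(0,1)$ independent of $S_1$, so infinite divisibility follows from the subordination formula \citep[Theorem 30.1]{Sato}. The main obstacle lies in verifying that $\phi_t/\phi_s$ is itself a valid CF, and I would reduce this to the \emph{self-decomposability} of $X$: for every $c\in(0,1)$ there exists an infinitely divisible law with CF $\phi_{Y_c}(u)=\phi_X(u)/\phi_X(cu)$. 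Given self-decomposability, setting $c=v_s/v_t\in[0,1]$ and using the scaling $\phi_t(u)=\phi_X(uv_t)$ yields $\phi_t(u)/\phi_s(u)=\phi_{Y_c}(uv_t)$, a bona fide infinitely divisible CF. Self-decomposability of $X$ itself follows from the classical criterion \citep[Theorem 15.10]{Sato} that the Lévy density of $X$ be of the form $g(x)=k(x)/|x|$ with $k$ unimodal away from the origin; via the subordination formula this reduces to the analogous property of the tempered stable Lévy measure $\rho(ds)\propto s^{-1-\alpha}e^{-\beta s}ds$ on $\R_+$, which is immediate by inspection (and for $\alpha=0$ the subordinator is Gamma, giving the classical VG case).

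Finally, for the martingale property, Additivity provides independent increments, so it suffices to show $\E[f_t]=0$ for every $t$. Differentiating the exponent of $\phi_t$ at $u=0$ gives $\E[f_t]=-i\,\partial_u\ln\phi_t(u)\big|_{u=0}=\eta v_t\bigl(1+\psi'(0;k,\alpha)\bigr)=0$, using $\psi'(0;k,\alpha)=-1$ as computed above; combined with $f_0=0$ a.s.\ and independent increments, this yields the martingale property $\E[f_t\mid\mathcal F_s]=f_s$.
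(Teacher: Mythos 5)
Your architecture is sound and genuinely different from the paper's. The paper proves Additivity by writing down the explicit generating triplet of the time-inhomogeneous family (Lemma \ref{lem:f_AdditiveGen} in Appendix A computes the L\'evy density in Bessel-$K$ form and checks directly that $\nu_t$ is non-decreasing in $t$, continuous, and vanishes as $t\to0^+$, then invokes Sato's triplet criterion), and gets the martingale property from the equivalence in law $f_t\stackrel{d}{=}\sigma_t\sqrt{t}\,z$ with $\E[z]=0$ (Lemma \ref{lem:f_equivalence}). You instead exploit the scaling structure $\phi_t(u)=\phi_X(u\,\sigma_t\sqrt{t})$ for a single time-independent Normal Tempered Stable law $X$ and reduce everything to self-decomposability of $X$ plus monotonicity of $v_t=\sigma_t\sqrt{t}$; this is the standard ``self-decomposable law composed with a non-decreasing deterministic clock'' construction and, if completed, is arguably cleaner and more conceptual. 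Your martingale argument ($\psi'(0;k,\alpha)=-1$ in both branches, hence $\E[f_t]=0$, plus independent increments) is fine, modulo the routine remark that analyticity of $\phi_t$ in a strip gives integrability.

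The genuine gap is the self-decomposability step. You claim that, via the subordination formula, the criterion of \citep[Thm.\ 15.10]{Sato} for $X$ ``reduces to the analogous property of the tempered stable L\'evy measure $\rho(\mathrm{d}s)\propto s^{-1-\alpha}e^{-\beta s}\,\mathrm{d}s$, which is immediate by inspection.'' That reduction is precisely the non-trivial point: subordination does \emph{not} in general preserve self-decomposability, even when both the subordinator and the subordinand are self-decomposable (Sato, \emph{Subordination and self-decomposability}, Statist.\ Probab.\ Lett.\ 54, 2001, gives counterexamples), so monotonicity of $s\,\rho(s)=s^{-\alpha}e^{-\beta s}$ on $\R^+$ cannot simply be pushed through the mixture $g(x)=\int_0^\infty \rho(s)\,(2\pi s)^{-1/2}e^{-(x+\eta s)^2/(2s)}\,\mathrm{d}s$: the integrand $|x|\,e^{-(x+\eta s)^2/(2s)}$ is not monotone in $x$ for fixed $s$. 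The conclusion is nevertheless true, but you must actually prove it, either (a) via Bondesson's theory (the tempered stable law is a generalized gamma convolution, and normal variance--mean mixtures with GGC mixing laws are extended GGCs, hence self-decomposable), or (b) by a direct computation on the explicit L\'evy density $g(x)\propto|x|^{-\alpha-1/2}e^{-\eta x}K_{\alpha+1/2}(\lambda|x|)$, $\lambda=\sqrt{\eta^2+2(1-\alpha)/k}$, showing that $|x|\,g(x)$ is monotone on each half-line; the latter requires Bessel-function inequalities such as $K_{\nu+1}(\zeta)/K_\nu(\zeta)\ge 1+1/\zeta$ for $\nu\ge 1/2$ to handle the sign of $\eta$, and is certainly not by inspection. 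Until one of these is supplied, the step ``$\phi_t/\phi_s$ is an infinitely divisible characteristic function'' is asserted rather than proved.
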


This property of the proposed model is important: being additive, it is possible to simulate this process 
with an algorithm as fast as the simple Bachelier model that requires the simulation of only Gaussian rvs
\citep[see, e.g.,][]{azzone2023fast}. Thus, any structured product with a discretely \textcolor{black}{monitored} payoff can be priced and 
managed as a Bachelier model with similar characteristics in terms of simplicity and speed.
\\
Moreover, as shown in the next Section, this model provides a simple closed formula for European options, which is a generalization of the \citet{lewis2001} formula.

\subsection{European call price closed formula}

Let us consider the European call option (strike $K$, maturity $t$, moneyness $x$ and discount factor $B_0$)
\begin{equation}
\label{eq:EU_generic}
C(x, t; \mathbf{p}) =B_0\,\E_0[F_t-K]^+=B_0\,\E_0[f_t-x]^+\,\,;
\end{equation}
the proposed model allows for a very simple pricing formula.
\begin{proposition} {\rm (Call option Bachelier-Lewis formula)} 
\label{prop:Lewis}\\ 
When option's underlying is described by \eqref{eq:ForwardDynamics} and $ \left\{f_t \right\}_{t\geq 0}$ is an additive martingale process with 
characteristic function $\phi_t$, analytic in the horizontal strip $(-\plt,\prt)$ with $p^\pm_t>0$, the call option can be written as
\be
\label{eq:LewisFormula}
C(x, t; \mathbf{p}) =B_0\left(R_a+\frac{e^{x\,a}}{2\pi}\int_{-\infty}^{\infty}\left(\phi_t(\xi+ia)\frac{e^{-i\xi x}}{(i\xi-a)^2}+\frac{\mathbbm{1}_{\{a=0\}}}{\xi^2}\right)\,\mathrm{d}\xi\right)
\en
for any $a\in(-\plt,\prt)$ and
\[
R_a=
\left\{
\begin{array}{cll}
    0							&\text{if} & a\in(-\plt,0)\\[3mm]
    \displaystyle -\frac{x}{2} 	&\text{if} & a=0\\[3mm]
    -x 						&\text{if} & a\in(0,\prt)
\end{array}
\right.\;.
\]
\end{proposition}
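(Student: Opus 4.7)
The strategy is the classical Lewis-style Fourier argument. I will represent the call payoff $(f_t-x)^+$ via its generalized Fourier transform on a horizontal contour strictly inside the analyticity strip $(-p_t^-,p_t^+)$ of $\phi_t$, swap the expectation with the $\xi$-integral via Fubini, and then apply Cauchy's theorem to translate the contour to any other level $a$ in the strip. The only singularity encountered in the strip is a double pole at $z=0$ coming from the Fourier transform of the payoff, whose residue is the source of the three cases in $R_a$.

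Concretely, I first compute the generalized Fourier transform of $g(y):=(y-x)^+$ on a line $\Im z = a_1$ with any $a_1\in(-p_t^-,0)$; a short integration by parts gives $\widehat g(z) = -e^{-izx}/z^2$, the constraint $\Im z<0$ being needed for convergence as $y\to+\infty$. Inserting this in the inversion formula, taking expectation, and applying Fubini, I obtain
\[
\E[(f_t-x)^+] \;=\; \frac{1}{2\pi}\int_{\R} \widehat g(\xi+ia_1)\,\phi_t(\xi+ia_1)\, d\xi.
\]
The algebraic identities $(\xi+ia_1)^2=-(i\xi-a_1)^2$ and $e^{-i(\xi+ia_1)x}=e^{a_1 x}e^{-i\xi x}$ put this in the form \eqref{eq:LewisFormula} with $R_{a_1}=0$, which disposes of the case $a\in(-p_t^-,0)$. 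To reach any other $a$, I translate the contour from $\Im z = a_1$ to $\Im z = a$. The vertical connecting segments at $\Re z=\pm R$ vanish as $R\to\infty$ by decay of $\phi_t$ inside the strip, so Cauchy's theorem reduces the contour change to $2\pi i$ times the residue at $z=0$. That residue equals $\phi_t'(0)-ix\,\phi_t(0) = -ix$, where I have used $\phi_t(0)=1$ together with the martingale property of Proposition~\ref{th:f_Additive}, which forces $\phi_t'(0)=i\,\E[f_t]=0$. Translating this into the sign convention of \eqref{eq:LewisFormula}, each upward crossing of the origin subtracts exactly $x$, producing $R_a=-x$ for $a\in(0,p_t^+)$; the symmetric semicircular indentation at $a=0$ picks up half the jump, giving $R_a=-x/2$. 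Multiplying by $B_0$ yields the claim.

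The main obstacle is the analytic justification of the two integral manipulations -- the Fubini exchange and the vanishing of the vertical segments during the contour shift -- both of which reduce to uniform integrability of $\xi\mapsto\phi_t(\xi+ia)/(i\xi-a)^2$ on every horizontal line inside the strip. Once the analyticity strip and a mild decay of $\phi_t(\xi+ia)$ as $|\xi|\to\infty$ are in hand (the content of the Moment proposition anticipated in the footnote below \eqref{eq:ChFun}), the rest is a standard verification. A minor secondary subtlety is the boundary value $a=0$, where the contour passes through the pole: the natural Cauchy principal value with symmetric indentation is consistent with taking the common limit of the formulas for $a\to 0^\pm$, and it is what produces the $-x/2$ correction.
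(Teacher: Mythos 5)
Your proposal is correct and follows essentially the same route as the paper: the generalized Fourier transform of the call payoff on a contour with $\Im z<0$, a Parseval/Fubini exchange against the characteristic function, and a Cauchy/residue-theorem contour shift whose double pole at the origin (with $\phi_t(0)=1$, $\phi_t'(0)=0$ by the martingale property) produces the three cases of $R_a$. The only cosmetic difference is that the paper first normalizes to the rv $z=f_t/(\sigma_t\sqrt t)$ and rescales at the end, and cites Lee (2004, Thms 5.1--5.2) for the residue corrections that you compute explicitly.
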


This closed formula for the European call allows us to price plain vanilla options in a simple and fast way, thanks to its Fourier transform formulation. Let us observe that the formula \eqref{eq:LewisFormula} is generally used in practice for $a \neq 0$ in the analyticity strip $(-\plt,\prt)$. In this case, the integrand is always \textcolor{black}{analytic}: $\phi_t$ is evaluated within its strip of analyticity, and the remaining part does not contain any singularity. This closed formula can be integrated with standard quadrature techniques and even with fast Fourier transform methods \citep[see, e.g.,][]{teukolsky1992numerical} that can be extremely useful, speeding up the calibration. 
The additive Bachelier model also grants 
another call option formula.

\begin{proposition} {\rm (A second call option formula)} 
\label{prop:price} \\
When option's underlying process is described by $\{f_t \}_{t\geq 0}$ with 
characteristic function \eqref{eq:ChFun}, 
the call option is given by 
\begin{equation}
    C(x, t; \mathbf{p}) = B_0  \, \sigma_t \, \sqrt{ t}   \; 
\E \left[  c_b\left(  \frac{x}{ \sigma_t \, \sqrt{ t}} + \eta \, (G-1) , \sqrt{G} \right) \right]\,\,,
\label{eq:ModelPrice_Time}
\end{equation}
where $G$ is a positive rv with Laplace exponent $\psi(u; k, \alpha)$, defined in \eqref{eq:laplaceG}.
\end{proposition}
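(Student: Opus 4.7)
The plan is to recognize the characteristic function \eqref{eq:ChFun} as that of a \emph{normal mean--variance mixture} and then condition on the mixing variable to reduce the European call price to a Bachelier-type expectation.

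\textbf{Step 1: Mixture representation of $f_t$.} Let $G$ be a positive random variable with Laplace exponent $\psi(\,\cdot\,;k,\alpha)$, i.e. $\E[e^{-u\,G}]=\exp\psi(u;k,\alpha)$ for $u$ in the appropriate domain. I claim that the law of $f_t$ coincides with that of a Gaussian conditional on $G$, namely
\[
f_t \mid G \;\sim\; \mathcal{N}\!\left(-\eta\,\sigma_t\sqrt{t}\,(G-1),\; G\,\sigma_t^{2}\,t\right).
\]
To verify this I would compute the conditional characteristic function $\E[e^{iu f_t}\mid G]=\exp\{-iu\eta\sigma_t\sqrt{t}(G-1)-\tfrac{1}{2}u^{2}G\sigma_t^{2}t\}$, factor out $e^{iu\eta\sigma_t\sqrt{t}}$, and take expectation in $G$. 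Since the remaining $G$-dependent exponent is linear in $G$ with coefficient $-(iu\eta\sigma_t\sqrt{t}+\tfrac{u^{2}}{2}\sigma_t^{2}t)$, one immediately recovers \eqref{eq:ChFun} through the Laplace transform of $G$. This confirms the mixture representation.

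\textbf{Step 2: Tower property for the call price.} Writing $C(x,t;\mathbf{p})=B_0\,\E[(f_t-x)^{+}]$ and conditioning on $G$ yields
\[
C(x,t;\mathbf{p})= B_0\,\E\!\left[\E\big[(f_t-x)^{+}\,\big|\,G\big]\right].
\]
Conditionally on $G$, $f_t$ is Gaussian with mean $m=-\eta\sigma_t\sqrt{t}(G-1)$ and variance $v=G\sigma_t^{2}t$; hence, by a direct computation analogous to the one recalled in Appendix~A for the Bachelier model,
\[
\E\big[(f_t-x)^{+}\,\big|\,G\big]=c_b\!\left(x-m,\sqrt{v}\right)=c_b\!\left(x+\eta\sigma_t\sqrt{t}(G-1),\;\sigma_t\sqrt{t}\sqrt{G}\right),
\]
where $c_b$ is the normalized Bachelier price in \eqref{eq:BasicFormula}.

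\textbf{Step 3: Scaling and conclusion.} The normalized Bachelier price is homogeneous of degree one: $c_b(\lambda y,\lambda\sigma)=\lambda\, c_b(y,\sigma)$ for every $\lambda>0$, as is immediate from \eqref{eq:BasicFormula}. Applying this with $\lambda=\sigma_t\sqrt{t}$ gives
\[
c_b\!\left(x+\eta\sigma_t\sqrt{t}(G-1),\;\sigma_t\sqrt{t}\sqrt{G}\right)=\sigma_t\sqrt{t}\;c_b\!\left(\frac{x}{\sigma_t\sqrt{t}}+\eta(G-1),\;\sqrt{G}\right).
\]
Plugging this into the tower identity and pulling the deterministic factor $B_0\,\sigma_t\sqrt{t}$ outside the expectation yields exactly \eqref{eq:ModelPrice_Time}.

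\textbf{Main obstacle.} The only non-routine step is Step~1, i.e. justifying that the random variable $G$ with Laplace exponent $\psi$ exists and that the mixture identification of characteristic functions is valid on the relevant strip. The existence of $G$ is standard (both cases of \eqref{eq:laplaceG} are classical: $\alpha=0$ gives a Gamma law, $0<\alpha<1$ a tempered-stable law), and the strip of analyticity stated in the excerpt (together with the analyticity properties of $\psi$) covers the domain needed for the Laplace-transform identity. An application of Fubini to interchange the $G$-expectation with the payoff expectation requires only integrability of $(f_t-x)^{+}$, which follows from the finiteness of exponential moments granted by the strip $(-p_t^{-},p_t^{+})$.
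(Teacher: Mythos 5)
Your proposal is correct and follows essentially the same route as the paper: the paper's Lemma B.1 establishes precisely your Step 1 (that $f_t$ is equal in law to $\sigma_t\sqrt{t}\,z$ with $z=\eta(1-G)-\sqrt{G}\,g$, i.e.\ a normal mean--variance mixture verified through the conditional characteristic function and the Laplace transform of $G$), and the paper's proof of the proposition then invokes the law of iterated expectations exactly as in your Steps 2--3. The only cosmetic difference is that the paper normalizes by $\sigma_t\sqrt{t}$ before conditioning, whereas you condition first and then use the homogeneity \eqref{eq:hom} to extract the factor $\sigma_t\sqrt{t}$.
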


This alternative pricing formula looks rather similar to that of the Bachelier model.\textcolor{black}{\footnote{\textcolor{black}{\label{fn:integrals}The integral in \eqref{eq:ModelPrice_Time} is computed in the real space for $\alpha=0,1/2$, when the pdf of $G$ is known explicitly \textcolor{black}{\citep[see, e.g.,][Table 4.4, Sec.4.4.2]{Cont}}, while for the other values of $\alpha$ is computed in the Fourier space with a formula like the Bachelier-Lewis \eqref{eq:LewisFormula}.}}}
The Bachelier formula \eqref{eq:BachelierFormula} corresponds to the case when the rv $G$ is identically equal to $1$, i.e.\ 
$\psi(u; k, \alpha) = -u$ that corresponds to the limiting case $k \to 0^+$.
On the one hand, this formulation enables some important properties of the implied volatility, as discussed in the next Section, and 
on the other hand, it leads to the cascade calibration. 
The latter property -- probably the most important characteristic of the additive Bachelier model --
is presented in Section \ref{sec:Calibration}. 

\subsection{Implied volatility and its properties}

The implied volatility (hereinafter implied vol) is a common description of model option prices. 
The implied vol is defined as the value of the volatility which, 
when entered in the Bachelier option pricing model \eqref{eq:BachelierFormula}, returns a value equal to the model price \eqref{eq:LewisFormula} of that option.  Thus,
the implied vol $\mathcal{I}_t (x)$ depends on option's moneyness $x$ and 
maturity $t$ and it is obtained by solving the implied vol (IV) 
equation
\begin{equation}
    C_b(x, t; \mathcal{I}_t (x)) = C(x, t; \mathbf{p}) \;\; .
\label{eq:IV eq}
\end{equation}

\begin{proposition} {\rm (Existence and uniqueness of the implied vol)}

Given $\{f_t \}_{t\ge 0}$, \textcolor{black}{for any} $ {t} \in \R^+\setminus\{0\}$ and ${x} \in \R$, 
it exists a unique ${\cal I}_t(x) \in \R^+$ solution of the IV equation \eqref{eq:IV eq}.
\label{pr:Existence}
\end{proposition}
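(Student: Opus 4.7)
The plan is to show that, for fixed $x\in\R$ and $t>0$, the map $\sigma\mapsto C_b(x,t;\sigma)$ is a continuous strictly increasing bijection from $\R^+$ onto the open interval $(B_0(-x)^+,\infty)$, and then to verify that the model price $C(x,t;\mathbf{p})$ lies inside this interval. The proposition will then follow from the intermediate value theorem together with strict monotonicity.

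First, I would establish monotonicity and continuity of the Bachelier price in its volatility argument. A direct computation from \eqref{eq:BasicFormula} yields the normalised Bachelier vega $\partial c_b(y,\sigma)/\partial\sigma=\varphi(-y/\sigma)>0$ for every $y\in\R$ and $\sigma>0$, so by \eqref{eq:BachelierFormula} the map $\sigma\mapsto C_b(x,t;\sigma)$ is $C^1$ and strictly increasing on $\R^+$. I would then inspect the boundary limits of $c_b(y,\sigma)$ from \eqref{eq:BasicFormula}: as $\sigma\to 0^+$ both $-y\,\Phi(-y/\sigma)$ and $\sigma\,\varphi(-y/\sigma)$ are elementary to evaluate and give $c_b(y,\sigma)\to(-y)^+$; as $\sigma\to\infty$ the term $\sigma\,\varphi(-y/\sigma)$ diverges while $-y\,\Phi(-y/\sigma)$ stays bounded, hence $c_b\to+\infty$. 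Passing these limits through \eqref{eq:BachelierFormula} shows that $C_b(x,t;\cdot)$ is a bijection from $\R^+$ onto $(B_0(-x)^+,\infty)$.

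Next, I would check that the target price belongs to this open interval. Writing $C(x,t;\mathbf{p})=B_0\,\E[(f_t-x)^+]$ and using the convexity of $u\mapsto(u-x)^+$ together with the martingale identity $\E[f_t]=f_0=0$ from Proposition \ref{th:f_Additive}, Jensen's inequality yields $\E[(f_t-x)^+]\ge(-x)^+$. The inequality is \emph{strict} because $f_t$ is not almost surely constant: indeed the characteristic function \eqref{eq:ChFun} is non-trivial whenever $\sigma_t>0$, which is precisely the standing assumption recalled in the footnote after \eqref{eq:ChFun}. Finiteness $C(x,t;\mathbf{p})<\infty$ follows from the existence of exponential moments of $f_t$, guaranteed by the analyticity of $\phi_t$ in a horizontal strip containing a neighbourhood of the origin (the property announced right after \eqref{eq:ChFun} and proved in Proposition \ref{pr:Moment}). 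Combining the two steps gives a unique $\mathcal{I}_t(x)\in\R^+$ solving \eqref{eq:IV eq}.

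The main obstacle is the strict lower bound $C(x,t;\mathbf{p})>B_0(-x)^+$: equality in Jensen's inequality would force $f_t$ to be almost surely constant, which is exactly the degenerate case that the hypothesis $\sigma_t>0$ is designed to exclude. All the remaining ingredients (vega positivity, boundary limits of $c_b$, and integrability of $(f_t-x)^+$) are routine and appeal only to facts already recorded in the excerpt.
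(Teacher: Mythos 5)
Your proposal is correct and follows essentially the same route as the paper's own proof: strict monotonicity of the Bachelier price in $\sigma$ via vega positivity, the boundary limits $c_b(y,\sigma)\to(-y)^+$ as $\sigma\to0^+$ and $c_b\to+\infty$ as $\sigma\to\infty$, and the strict Jensen bound $\E_0[f_t-x]^+>(-x)^+$ to place the model price in the attainable range. You are in fact slightly more careful than the paper, which asserts strictness of the Jensen inequality and finiteness of the model price without comment, whereas you tie them explicitly to the non-degeneracy assumption $\sigma_t>0$ and to the analyticity strip of $\phi_t$.
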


In the literature, some authors
consider the 
moneyness degree, the moneyness divided by the square root of the time-to-maturity (ttm), 
showing that in some cases it is a natural model description  \citep{carr2003finite,medvedev}.
For the additive Bachelier, the implied vol ${\cal I}_t(x)$, written in terms of the (normalized) moneyness degree
\begin{equation}
    y := \frac{x}{\sigma_t  \, \sqrt{t}} \; ,
\label{eq:mondegree}
\end{equation}
has a very simple expression, as stated in the following Proposition.

\begin{proposition} {\rm (Separability of the implied vol)}
\label{prop:IyTimeIndep} 

\textcolor{black}{For any $y\in\mathbb{R}$, $t\in\mathbb{R}^+\setminus\{0\}$,} the implied vol \textcolor{black}{${\cal I}_t(\bullet)$} is a separable function of $y$ \eqref{eq:mondegree} and $t$
\textcolor{black}{\begin{equation}
    \label{eq:ImpVolSeparability}
{\cal I}_t\left(y\,\sigma_t\,\sqrt{t}\right) = \sigma_t \, I(y) \; ,
\end{equation}}
\textcolor{black}{where ${I}(\bullet)$ is the solution of the time independent equation}
\begin{equation}
    \label{eq:MainIV}
 {c_b\left( y, {I}(y)\right)}  =  \E \left[ c_b\left(y+  \eta \, (G -1) , \sqrt{G}\right) \right]     \; ,
\end{equation}
\textcolor{black}{where $G$ is the rv in \eqref{eq:ModelPrice_Time}.}
\end{proposition}

Let us notice that the above property is a fundamental characteristic of this model. The possibility to separate the dependency due to the maturity in the implied volatility will allow us to develop a calibration scheme that reproduces exactly ATM vols as described in Section \ref{sec:Calibration}. By abuse of language, whenever there is no ambiguity, we also call $I(y)$ implied vol or volatility smile, 
since it describes the volatility smile of the implied vol.\textcolor{black}{\footnote{\textcolor{black}{As noted above regarding the characteristic function $\phi_s(u)$, the implied vol would be formally denoted by $I(y;\eta,k)$. However, to avoid cluttering the notation, we write $I(y)$, keeping the dependence on $\eta$ and $k$ implicit when it is clear from the context. In instances where only one of the two parameters is relevant, e.g.\ $\eta$, we denote the volatility smile as $I(y;\eta)$.}}}

\smallskip

The unique implied vol ${ I}({ y})$ can be obtained by solving the very simple IV equation (\ref{eq:MainIV}), and it presents several interesting properties.
In particular: i) it is a very regular function, being ${\cal C}^2 ( \R)$,  ii) its symmetry is regulated by the parameter $\eta$
and iii) its expression is known both for small and large (normalized) moneyness degrees.
These properties are described in the following Propositions.

\begin{proposition} {\rm  (Regularity of ${ I}
$)}

For any $\eta\in\R$ and $ k \in \R^+$,  ${ I}({ y})$ is ${\cal C}^2 ( \R)$.
\label{pr:Regularity}
\end{proposition}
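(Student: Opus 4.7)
I would apply the implicit function theorem to the defining equation of $I$. Let
$F(y, I) := c_b(y, I) - P(y), \qquad P(y) := \E\!\left[c_b\!\left(y + \eta(G-1), \sqrt{G}\right)\right],$
so that, by Proposition~\ref{prop:IyTimeIndep}, $I(y)$ is the unique positive root of $F(y, \cdot) = 0$ (existence and strict positivity being granted by Proposition~\ref{pr:Existence}). From the explicit formula \eqref{eq:BasicFormula}, $c_b$ is manifestly $\mathcal{C}^{\infty}$ on $\R \times \R^{+}$; a direct differentiation shows that the two terms produced by the chain rule cancel in pairs, giving the strictly positive Bachelier vega $\partial_I c_b(y, I) = \varphi(-y/I) > 0$. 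Hence $\partial_I F \neq 0$ on the locus $\{(y, I(y)) : y \in \R\}$, and as soon as $P \in \mathcal{C}^{2}(\R)$ is established, the implicit function theorem yields $I \in \mathcal{C}^{2}(\R)$ together with algebraic expressions for $I'(y)$ and $I''(y)$ in terms of $P'$, $P''$, and partial derivatives of $c_b$ evaluated at $(y, I(y))$.

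The real task is therefore to prove $P \in \mathcal{C}^{2}(\R)$, which I would tackle by differentiating twice under the expectation via dominated convergence. Since $\partial_y c_b(y, \sigma) = -\Phi(-y/\sigma)$ is uniformly bounded by $1$, $P \in \mathcal{C}^{1}(\R)$ is immediate. The second derivative is the delicate point: $\partial_y^{2} c_b(y, \sigma) = \varphi(-y/\sigma)/\sigma$ with $\sigma = \sqrt{G}$ produces the integrand
$\frac{1}{\sqrt{2 \pi G}}\, \exp\!\left(-\frac{(y + \eta(G-1))^{2}}{2 G}\right),$
which is singular on the event $\{G \to 0^{+}\}$. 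To dominate it I would localize: fix $y_0 \in \R$ and a small bounded neighborhood $V \ni y_0$. If $V$ can be chosen disjoint from $\eta$, then on $V$ the quadratic exponent contributes a factor $\exp(-(y - \eta)^{2}/(2G))$ which decays super-exponentially in $1/G$ and overwhelms the $1/\sqrt{G}$ divergence, giving a uniform $L^{\infty}(\Omega)$ bound. If instead $y_0 = \eta$ the exponential factor is no longer of help and the integrand is controlled only by $1/\sqrt{2 \pi G}$; here an integrable dominant exists provided the negative moment $\E[G^{-1/2}]$ is finite, a property that can be read off directly from the Laplace exponent $\psi$ in \eqref{eq:laplaceG}.

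Once the dominated-convergence step is justified on every such $V$, $P \in \mathcal{C}^{2}(\R)$ follows by covering $\R$, and the implicit function theorem together with the positive-vega identity $\partial_I c_b = \varphi(-y/I) > 0$ concludes the proof. The main obstacle I anticipate is precisely this last uniform-integrability estimate at $y = \eta$: there the Gaussian damping disappears and the $\mathcal{C}^{2}$ regularity of $I$ is inherited directly from a mild negative-moment property of the mixing variable $G$ encoded in \eqref{eq:laplaceG}.
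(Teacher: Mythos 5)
Your skeleton is the same as the paper's: set $\mathcal{G}(y,I) := c_b(y,I) - \E\bigl[c_b(y+\eta(G-1),\sqrt{G})\bigr]$, observe that $\partial_I c_b(y,I) = \varphi(-y/I) > 0$, and invoke the implicit function theorem; the paper then writes $I'$ and $I''$ explicitly and simply asserts, ``with some computations,'' that the resulting expressions are well defined for all $y$, $\eta$, $k$. Where you add genuine value is in isolating the step the paper passes over in silence: differentiating the expectation twice in $y$ produces the integrand $G^{-1/2}\,\varphi\bigl((y+\eta(G-1))/\sqrt{G}\bigr)$, which is singular as $G\to 0^+$, and your localization (Gaussian damping away from $y=\eta$, reduction to a negative moment of $G$ at $y=\eta$) is exactly the right way to attack it.

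The gap is in the final step, which you flag but do not close: the claim that $\E[G^{-1/2}]<\infty$ ``can be read off'' from \eqref{eq:laplaceG} is not true uniformly in the parameters. For $\alpha\in(0,1)$ the subordinator $G$ is tempered stable, its density vanishes like $e^{-c/x}$ as $x\to 0^+$, all negative moments are finite, and your argument goes through. But for $\alpha=0$ the Laplace transform $(1+uk)^{-1/k}$ identifies $G$ as Gamma with shape $1/k$, whose density behaves like $x^{1/k-1}$ near the origin, so $\E[G^{-1/2}]<\infty$ if and only if $k<2$. For $k\geq 2$ your dominating function at $y=\eta$ (and, when $\eta=0$, at $y=0$) is not integrable, and no alternative domination can rescue the statement there: since $\partial_y c_b$ is monotone, a Fatou argument on the difference quotients of $P'$ shows the candidate second derivative of $P$ is $+\infty$ at that point, which forces $I''(\eta)=+\infty$ through the identity $c_b(y,I(y))=P(y)$. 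So the $\mathcal{C}^2$ claim as stated ``for any $k\in\R^+$'' actually fails at one point in the $\alpha=0$, $k\geq 2$ regime --- an issue present, unacknowledged, in the paper's own proof as well. To complete your argument you should either restrict to $\alpha\in(0,1)$ (which covers the $\alpha=1/2$ case used in the calibration) or add the hypothesis $k<2$ when $\alpha=0$, and in either case actually verify the small-$G$ behaviour of the density rather than deferring it.
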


It is possible to observe that for $k$ equal to zero, the unique solution of the IV equation (\ref{eq:MainIV}) is ${ I}({ y}) = 1$ $\forall { y} \in \R$, corresponding to the Bachelier model with time dependent deterministic volatility \textcolor{black}{in \eqref{eq:BachelierFormula}}.

\smallskip


\begin{proposition} {\rm (Symmetry of ${ I}
$) }
\label{pr:Symmetry}

\textcolor{black}{For any $y\in\mathbb{R}$, $\eta\in\mathbb{R}$ and $k\in\mathbb{R}^+$, }
${ I}({ y}; \eta) = { I}(-{ y}; -\eta) $.
\end{proposition}

\begin{corollary} {\rm (${ I}$  even function of ${ y}$) }\\
\textcolor{black}{For any $k\in\mathbb{R}^+$,} \textcolor{black}{${ I}(y;\eta)$}  is an even function of ${ y}$, iff $\eta=0$.
\label{cor:Symmetry}
\end{corollary}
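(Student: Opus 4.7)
The $(\Leftarrow)$ direction follows at once from Proposition \ref{pr:Symmetry}: setting $\eta=0$ in $I(y;\eta)=I(-y;-\eta)$ gives $I(y;0)=I(-y;0)$, so $I$ is even in $y$.

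For $(\Rightarrow)$, assume $I(y;\eta)=I(-y;\eta)$ for every $y\in\R$. Combined with Proposition \ref{pr:Symmetry} this yields $I(y;\eta)=I(y;-\eta)$, so the IV equation \eqref{eq:MainIV} evaluated at $\eta$ and at $-\eta$ forces
\[
\E\!\left[c_b\!\left(y+\eta(G-1),\sqrt{G}\right)\right]
\;=\;
\E\!\left[c_b\!\left(y-\eta(G-1),\sqrt{G}\right)\right]
\qquad\forall\,y\in\R.
\]
Using $c_b(y,\sigma)=\E[(\sigma Z-y)^+]$ with $Z\sim N(0,1)$ independent of $G$, together with $-Z\stackrel{d}{=}Z$, this says that $X:=\sqrt{G}\,Z-\eta(G-1)$ and $-X$ have identical call prices at every strike, hence identical distributions. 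By Proposition \ref{prop:price}, $f_t$ has the law of $\sigma_t\sqrt{t}\,X$, so $f_t\stackrel{d}{=}-f_t$ and every odd cumulant of $f_t$ must vanish.

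To finish, I would expand $\psi(w;k,\alpha)=-w+\tfrac{k}{2}w^{2}-\tfrac{(2-\alpha)k^{2}}{6(1-\alpha)}w^{3}+O(w^{4})$ around $w=0$, substitute $A=u^{2}\sigma_t^{2}t/2$ and $B=u\eta\sigma_t\sqrt{t}$ into $\psi(A+iB)-\psi(A-iB)+2iB$ (the obstruction to $\phi_t(u)=\phi_t(-u)$), and read off the coefficient of $u^{3}$. The resulting cubic mismatch is proportional to $\eta\,\bigl[\,3+\tfrac{(2-\alpha)k\eta^{2}}{1-\alpha}\,\bigr]$, whose bracket is strictly positive for $\alpha\in[0,1)$ and $k>0$; hence $\eta=0$. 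The main obstacle I anticipate is this final bookkeeping --- tracking which terms are of order $u^{3}$ given that $A\sim u^{2}$ and $B\sim u$, and in particular retaining the $B^{3}$ contribution from the cubic term of $\psi$ alongside the $kAB$ contribution from the quadratic one. The earlier reductions (applying Proposition \ref{pr:Symmetry}, rewriting $c_b$ via its $(\sigma Z-y)^{+}$ representation, and invoking Proposition \ref{prop:price}) are essentially formal.
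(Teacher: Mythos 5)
Your proof is correct, and while the $(\Leftarrow)$ half coincides with the paper's (both just set $\eta=0$ in Proposition \ref{pr:Symmetry}), your $(\Rightarrow)$ half takes a genuinely different route. The paper argues locally at the ATM point: it computes the first derivative $I'_0=-\sqrt{\pi/2}\,\E\bigl[\mathrm{erf}\bigl(\tfrac{\eta}{\sqrt{2}}\tfrac{1-G}{\sqrt{G}}\bigr)\bigr]$ from \eqref{eq:I_first_der}, observes it is odd in $\eta$, and asserts that it vanishes only when $\eta=0$; evenness of $I$ forces $I'_0=0$ and hence $\eta=0$. You instead upgrade evenness of $I$ (via Proposition \ref{pr:Symmetry} and uniqueness of the implied vol) to the distributional symmetry $f_t\stackrel{d}{=}-f_t$, and then kill the third cumulant by expanding $\psi$; I checked your expansion and the resulting obstruction $\eta\bigl[3+\tfrac{(2-\alpha)k\eta^{2}}{1-\alpha}\bigr]$ is right, with the bracket strictly positive for $\alpha\in[0,1)$, $k>0$. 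Your approach is longer and still needs the bookkeeping you flag to be written out (plus the standard facts that call prices at all strikes determine the law and that analyticity of $\phi_t$ in a strip guarantees the third moment exists), but it replaces the paper's unproved claim that $\E\bigl[\mathrm{erf}(\cdot)\bigr]$ is nonzero for $\eta\neq0$ --- which is not entirely obvious, since $\mathrm{erf}$ saturates --- with an explicit, checkable computation. The paper's route is shorter and yields the financially meaningful statement that the ATM skew vanishes iff $\eta=0$; yours is more self-contained as a proof of the corollary.
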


\smallskip

The above properties suggest a comment on the parsimony of the proposed model.
\textcolor{black}{Not only does the additive Bachelier have few parameters} --just three-- being able to model the volatility surface.
They also describe three main distinct aspects of the surface, as we'll discuss in detail in Section \ref{sec:Calibration}, $\sigma_t$ models the term structure of the ATM volatility, i.e.\ it controls the level of the volatility surface.
Proposition \ref{pr:Symmetry} and Corollary \ref{cor:Symmetry} state that $\eta$ models the volatility skew, controlling the symmetry of the volatility, and for $\eta=0$ describes the surface with a perfectly symmetric smile \textcolor{black}{wrt} the ATM.
Finally, $k$ models the vol-of-vol, i.e.\ the convexity of the surface, being the limit $k \to 0^+$ the classical Bachelier model with no smile.

\bigskip

\begin{proposition} {\rm (Implied volatility  ${ I}$ for small $|y|$, i.e.\ close to the ATM)} \label{pr:ATMVolSkew}\\
\textcolor{black}{For any $\eta\in\mathbb{R}$, $k\in\mathbb{R}^+$,} the implied volatility  ${ I}$ around the ATM \textcolor{black}{(i.e., for $y \to 0$)} is
\[
{ I}({ y}) = { I}_0 +  { I}'_0 \, { y} + \frac{1}{2} { I}''_0 \, { y^2} +  o({ y^2}) 
\]
where \textcolor{black}{
\be
\left\{
\begin{array} {ccrl}
 { I}_0 :=&I(0) &=& 
\sqrt{2 \pi}  \;\,\E \left[  c_b\left(  {\eta} \, (G - 1) , \sqrt{G}\right)  \right] \\[5mm]
 { I}'_0 :=
 & \displaystyle\left. \frac{\partial {{I}} (y) }{\partial y} \right|_{y=0}&=& \displaystyle - \sqrt{\frac{\pi}{2}}  \displaystyle \;\,\E \left[ {\rm erf} \left(  \frac{  {\eta}}{\sqrt{2}} \, \frac{  1 - G }{  \sqrt{G}} \right) \right] \\[5mm]
{ I}''_0 :=
 & \displaystyle\left. \frac{\partial^2 {{I}} (y) }{\partial y^2} \right|_{y=0}&=& \displaystyle \sqrt{2 \pi} \; \displaystyle \;\,
 \E \left[ \frac{1}{  \sqrt{G}} \, \varphi \left(  {\eta} \, \frac{  1 - G }{  \sqrt{G}} \right) \right] - \frac{1}{I_0}
\label{eq:ATM coeff}
\end{array}
\right.  \; ,
\en}\\
with $I'_0$ an odd function in $\eta$ and $G$ is a positive rv with Laplace exponent $\psi(u; k, \alpha)$, defined in \eqref{eq:laplaceG}.
\end{proposition}

\smallskip

These results are extremely relevant for both ATM volatility and volatility skew.\footnote{\textcolor{black}{The integrals in \eqref{eq:ATM coeff} are computed in the real space for $\alpha=0,1/2$, as in footnote \ref{fn:integrals}.}}

The ATM vol, using implied vol separability \eqref{eq:ImpVolSeparability}, is
\textcolor{black}{
\be
\sigma^{ATM}_t := \mathcal{I}_t (x=0) = \sigma_t \, I_0
\label{eq:ATMvol}
\en}\\
i.e.\ the ATM volatility is proportional to $\sigma_t$, with the proportionality coefficient equal 
to ${ I}_0$ in \eqref{eq:ATM coeff} that depends only on $\eta$ and $k$.
Moreover, the quantity $ { I}_0$ is strictly positive, because it is the expectation of a positive function.\\
This fact will play a key role in the model calibration scheme, as discussed in Section \ref{sec:Calibration}.

\smallskip
 
The volatility skew (often simply named skew) is defined as \citep[see, e.g.,][]{gatheral2011volatility}
\[
skew := \left. \frac{\partial  \mathcal{I}_t (x) }{\partial x} \right|_{x=0} \; .
\]
It is proportional to $I'_0$. Indeed, the skew is equal to
\begin{equation}
    \left. \frac{\partial  \mathcal{I}_t (x) }{\partial x}  \right|_{x=0} = 
\sigma_t \, \left. \frac{\partial {{I}} (y) }{\partial y} \right|_{y=0} \frac{\partial y}{\partial x} = \frac{1}{\sqrt{t}} \; I'_0 \;\; .
\label{eq:modelSkew}
\end{equation}
Thus, we obtain a diverging skew for small times, a stylized fact of volatility surfaces.

\bigskip

The following result extends \cite{lee2004moment} formula, providing the characterisation of the asymptotic behaviour of the implied vol in the additive Bachelier model. Let us notice that the asymptotic behaviour is considered for the limit in Proposition \ref{pr:Moment}, while \cite{lee2004moment} considers the $\lim\sup$.
\begin{proposition} {\rm  (Implied volatility  ${ I}$  for large ${|y|}$) }
\label{pr:Moment}

\textcolor{black}{For any $\eta\in\mathbb{R}$, $k\in\mathbb{R}^+$}, the asymptotic behaviour of ${ I}({ y})$ for large $|{y}|$ is
\[
\lim_{{ y} \to \pm \infty} \frac{{ I}({ y})^2}{|{ y}|} =  \frac{1}{2 \; p^\lr} 
\]
with 
\[
p^\pm =  \lr \eta + \sqrt{ {\eta}^2 + 2 \, \frac{1-\alpha}{k}  }  \: .
\]
\end{proposition}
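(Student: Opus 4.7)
The plan is to adapt \citet{lee2004moment}'s moment formula to the Bachelier setting: we first locate the strip of analyticity of $\phi_t$ (which fixes the maximal exponential moments of $f_t$), then transfer this to the exponential decay rate of the call price, and finally invert through the Bachelier call asymptotic to read off the growth of the implied volatility.

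\textbf{Strip of analyticity and exponential moments.} From \eqref{eq:ChFun}--\eqref{eq:laplaceG}, $\phi_t(u)$ is analytic precisely where the argument $iu\,\eta\,\sigma_t\sqrt{t}+u^2\,\sigma_t^2\,t/2$ of $\psi$ avoids the branch singularity at $-(1-\alpha)/k$ (the logarithmic singularity at $-1/k$ when $\alpha=0$). Writing $u=\xi+iv$ and imposing that the real part of this quadratic exceed $-(1-\alpha)/k$ reduces to $\sigma_t^2\,t\,v^2/2 + \eta\,\sigma_t\sqrt{t}\,v < (1-\alpha)/k$, whose roots in $v$ are $(-\eta\pm\sqrt{\eta^2+2(1-\alpha)/k})/(\sigma_t\sqrt{t})$. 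Hence the maximal strip is $\Im u\in(-p_t^-,p_t^+)$ with $p_t^\pm = p^\pm/(\sigma_t\sqrt{t})$, and equivalently the moment generating function $\E[e^{a f_t}]$ is finite precisely for $a\in(-p_t^+,p_t^-)$.

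\textbf{Tail of the call price and inversion.} It suffices to treat $y\to+\infty$, as $y\to-\infty$ follows by the symmetry $I(y;\eta)=I(-y;-\eta)$ of Proposition \ref{pr:Symmetry}, which swaps $p^-$ and $p^+$. A Chernoff bound yields $C(x,t)/B_0\leq \E[e^{a f_t}]\,e^{-a x}/a$ for every $a\in(0,p_t^-)$, whence $\limsup_{x\to+\infty} -x^{-1}\log C(x,t)\geq p_t^-$. For the matching lower bound we use the conditional Gaussian representation $f_t \stackrel{d}{=} \sigma_t\sqrt{t}\,[\eta(1-G)+\sqrt{G}\,Z]$ of Proposition \ref{prop:price}: after the Mills-ratio expansion of $c_b$, the integrand inside $\E[c_b(y+\eta(G-1),\sqrt{G})]$ has leading exponent $(y+\eta(G-1))^2/(2G) = (y-\eta)^2/(2G) + (\eta^2/2)\,G + y\eta - \eta^2$, and, combined with the exponential tail $e^{-(1-\alpha)G/k}$ of the density of $G$ (read off from $\psi$), it admits a Laplace/saddle-point evaluation with saddle $G^\ast=(y-\eta)/\sqrt{\eta^2+2(1-\alpha)/k}$ and saddle value $(y-\eta)\bigl(\eta+\sqrt{\eta^2+2(1-\alpha)/k}\bigr)\sim y\,p^-$. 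This gives $-\log C(x,t) = p_t^-\,x + O(\log x)$. Matching with the Bachelier asymptotic $-\log C_b(x,t;\sigma)=x^2/(2\sigma^2 t)+O(\log x)$, itself a consequence of the Mills-ratio expansion in \eqref{eq:BasicFormula}, and inverting give $\mathcal{I}_t(x)^2\,t \sim x/(2 p_t^-)$; substituting $\mathcal{I}_t(x)=\sigma_t\,I(y)$, $x=y\,\sigma_t\sqrt{t}$ and $p_t^-=p^-/(\sigma_t\sqrt{t})$ collapses this to $I(y)^2/y\to 1/(2p^-)$. The main technical obstacle is the matching lower bound on the tail, because it requires uniform control on the density of $G$ near both $0$ and $\infty$; this is tractable since $G$ is tempered stable (Gamma when $\alpha=0$) with explicit exponential decay rate $(1-\alpha)/k$.
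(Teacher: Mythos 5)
Your proposal is correct and reaches the right constants, but it takes a genuinely different route for the key step. You and the paper agree on the first move: locating the analyticity strip of $\phi_t$ on the imaginary axis via the singularity of $\psi$ at $-(1-\alpha)/k$, which yields $p_t^\pm=p^\pm/(\sigma_t\sqrt{t})$ (your identification of the finiteness range of $\E[e^{af_t}]$ as $(-p_t^+,p_t^-)$ and your use of the symmetry $I(y;\eta)=I(-y;-\eta)$ to reduce to one wing are both right). Where you diverge is the transfer from exponential moments to the implied-volatility wings: the paper simply invokes a ready-made Bachelier analogue of Lee's moment formula (Theorem 3.7 of its cited reference, applied to the returns $\zeta_t=f_t/\sqrt{t}$), whereas you prove the transfer by hand — Chernoff bound for the fast-decay direction, and a Laplace/saddle-point evaluation of $\E[c_b(y+\eta(G-1),\sqrt{G})]$ for the matching direction, exploiting the conditional-Gaussian mixture representation and the exponential tail $e^{-(1-\alpha)G/k}$ of $G$. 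Your saddle computation checks out: the minimizer $G^\ast=(y-\eta)/\sqrt{\eta^2+2(1-\alpha)/k}$ gives exponent $(y-\eta)\bigl(\eta+\sqrt{\eta^2+2(1-\alpha)/k}\bigr)\sim y\,p^-$, consistent with the claimed limit. What your approach buys is self-containedness and an explicit picture of which values of $G$ drive the wing behaviour; what it costs is that the matching lower bound is only sketched and genuinely requires the density control you flag (the standard alternative, used implicitly by the quoted theorem, is to deduce the lower bound on the call tail from the divergence of $\E[e^{af_t}]$ for $a>p_t^-$ via the Carr--Madan-type identity expressing exponential moments as integrals of call prices, which avoids any density estimates). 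One small slip: the Chernoff bound controls $\liminf_{x\to+\infty}(-x^{-1}\log C)$, which is what you actually need, not the $\limsup$ you wrote; the bound you derived does deliver the $\liminf$, so this is a typo rather than a gap.
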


\section{Calibration scheme}
\label{sec:Calibration}

Model parameters have a clear financial interpretation and -- as already anticipated in the Introduction --
they can be calibrated following a cascade calibration, i.e.\ a calibration procedure that includes the derivatives in several stages according to their liquidity.

The cascade calibration in option markets is the procedure desired by any market maker when calibrating a model.
He/she would like to: first, reproduce the most liquid products %
(discount-rates and forwards/futures), then calibrate the most liquid options 
(ATM calls/puts) and finally fit the volatility surface for options that are either In-The-Money (ITM) or Out-The-Money (OTM).  
The idea is to get a more precise description of derivative contracts that are very liquid, thus presenting a narrow bid-ask spread
as futures or ATM options, leaving only at a final stage 
the model calibration of the other options in the volatility surface, 
where trades are rarer and with larger bid-ask spreads.

The additive Bachelier allows for a three-stage calibration scheme:
i) first, discount-rates and ATM forwards; ii) then, ATM options and iii) finally, all remaining traded options.
As we'll show in the next Section, such a calibration scheme is crucial in a very volatile market as the
oil derivative market during the first Covid months.

\bigskip

First, the model can be calibrated exactly to the ATM forwards from market data.
In particular, we follow the approach in \citet{azzone2021synthetic}, which allows us to extract from option prices 
precisely --with an elementary regression-- the term-structure used by option market makers of both (synthetic) forwards and
discount-rates. 

\smallskip

Second, we compute from option prices the term-structure of the ATM vol $\sigma^{ATM}_t$.
We select the parameter 
$\sigma_t$ proportional to the observed ATM vol $\sigma^{ATM}_t$.
This calibration allows us to reproduce exactly market ATM prices, in particular if we select the 
proportionality constant equal to $I_0$, as shown 
in \eqref{eq:ATMvol}.

\smallskip
Finally, we desire to calibrate the last two parameters of the additive Bachelier ($\eta$ and $k$) on the volatility surface.
We compute a normalized  moneyness ${\chi}$ for the quoted options using the 
observed ATM forwards, ATM vol, maturities, and strikes 
\begin{equation}
\label{eq:mon_norm_k}
{\chi} :=  \frac{x}{\sigma^{ATM}_t \,\sqrt{t} }  \;\; .    
\end{equation}

This quantity is equivalent to the moneyness degree \eqref{eq:mondegree},
but differs for the constant ${ I}_0$, having that two relations hold
\[
\left\{
\begin{array}{lcl}
\sigma^{ATM}_t &=& { I}_0  \, \sigma_t \\
y &=& { I}_0  \,  {\chi}
\end{array} 
\right. \;\; .
\]

We have already discussed the separability property 
\eqref{eq:ImpVolSeparability}
of the implied vol, which enables us to separate the dependence due to the term-structure (controlled by the parameter $\sigma_t$)
and the one deriving from the moneyness.
A similar property can be rewritten in terms of $t$ and $\chi$.

The implied vol, for any option with maturity $t$,  can be indicated in terms of ${\chi}$ as
\be
{\cal I}_t (x) = \sigma^{ATM}_t \, \Ib({\chi} ) \, 
\label{eq:ImpVolSeparabilityATM}
\en
where
\[
\Ib({\chi}) := \frac{I({\chi} \, { I}_0)}{I_0} \; 
\]
does not depend on the maturity $t$, as stated in the following Proposition.


\begin{proposition}\label{pr:altEqCalibration}
Model IV equation  \eqref{eq:MainIV} is equivalent to
\be
\label{eq:IVnorm}
 {c_{b}\left ({\chi} ,  \Ib({\chi} )\right)} =  \Cb ({\chi}; \eta, k)   \; ,
\en
where
\[
\Cb ({\chi}; \eta, k)   :=
\frac{1}{{ I}_0}
\E \left[ c_b\left({\chi} \, { I}_0 +  \eta \, (G -1) , \sqrt{G}\right) \right] \;\; ,
\]
and the implied vol $\Ib({\chi})$, unique solution of \eqref{eq:IVnorm}, does not depend on $t$.
\end{proposition}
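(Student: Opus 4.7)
The plan is to derive \eqref{eq:IVnorm} from \eqref{eq:MainIV} by a change of variables $y = \chi\,I_0$, exploiting the positive 1-homogeneity of the normalized Bachelier call $c_b$, and then to observe that the time-independence of $\bar I$ is inherited immediately from Proposition \ref{prop:IyTimeIndep}.

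First I would record the key scaling identity for the normalized Bachelier price \eqref{eq:BasicFormula}: for any $\lambda>0$,
\[
c_b(\lambda y,\lambda\sigma)
= -\lambda y\,\Phi\!\left(-\frac{\lambda y}{\lambda\sigma}\right)
+ \lambda\sigma\,\varphi\!\left(-\frac{\lambda y}{\lambda\sigma}\right)
= \lambda\, c_b(y,\sigma),
\]
i.e.\ $c_b$ is positively homogeneous of degree one in $(y,\sigma)$. Since $I_0>0$ (as noted after Proposition \ref{pr:ATMVolSkew} via Jensen), I can safely use $\lambda = 1/I_0$ in what follows.

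Next, I would take \eqref{eq:MainIV} and evaluate it at $y = \chi\,I_0$:
\[
c_b\!\left(\chi\,I_0,\,I(\chi\,I_0)\right)
= \E\!\left[c_b\!\left(\chi\,I_0 + \eta\,(G-1),\,\sqrt{G}\right)\right].
\]
Dividing both sides by $I_0$ and applying the scaling identity to the left-hand side with $\lambda = 1/I_0$ gives
\[
c_b\!\left(\chi,\,\frac{I(\chi\,I_0)}{I_0}\right)
=\frac{1}{I_0}\,\E\!\left[c_b\!\left(\chi\,I_0 + \eta\,(G-1),\,\sqrt{G}\right)\right],
\]
which by the definitions of $\bar I(\chi)$ and $\bar C(\chi;\eta,k)$ is exactly \eqref{eq:IVnorm}. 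The reverse implication follows by reading the same chain of identities backwards, using homogeneity with $\lambda = I_0$, so the two IV equations are equivalent.

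Finally, for the time-independence statement, Proposition \ref{prop:IyTimeIndep} asserts that $I(y)$ does not depend on $t$; moreover $I_0 = I(0)$ depends only on $\eta$ and $k$ by Proposition \ref{pr:ATMVolSkew}. Hence $\bar I(\chi) = I(\chi\,I_0)/I_0$ is a function of $\chi$, $\eta$ and $k$ alone, and in particular does not depend on $t$. Uniqueness of $\bar I(\chi)$ as a solution of \eqref{eq:IVnorm} is inherited from uniqueness of $I(y)$ in Proposition \ref{pr:Existence} via the bijection $y\leftrightarrow\chi\,I_0$. There is no real obstacle here; the only point requiring care is to notice and exploit the homogeneity of $c_b$, which is what converts the variable $y$ used in Section 2 into the market-observable normalized moneyness $\chi$ used by the cascade calibration.
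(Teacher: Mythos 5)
Your proof is correct and follows exactly the paper's route: the paper's own (one-line) proof likewise invokes the homogeneity property \eqref{eq:hom} of the Bachelier formula applied to the IV equation \eqref{eq:MainIV}, which is precisely the substitution $y=\chi\,I_0$ and rescaling by $1/I_0$ that you carry out in detail. Your additional remarks on $I_0>0$, uniqueness, and time-independence via Proposition \ref{prop:IyTimeIndep} are the implicit steps the paper leaves to the reader.
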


Let us observe that the right hand side of the IV equation (\ref{eq:IVnorm}), $ \Cb ({\chi}; \eta, k) $, depends only on the parameters $\eta$ and $k$. Thus, $\Ib({\chi})$ is 
function of only these two parameters, while it does not depend on $\sigma_t$.
This separability property is important since it enables a three-stage calibration scheme.
%
%
\begin{enumerate}
\item Calibrate ATM forwards and discount-rates on market data;  
\item Identify $\sigma^{ATM}_t$ on quoted options and choose $\sigma_t$ equal to $\sigma^{ATM}_t/I_0$ for every maturity $t$;
\item Calibrate the two parameters $\eta, k$ on OTM option prices.
\end{enumerate}

\textcolor{black}{In the marketplace, it is now standard to obtain ATM forwards and the discount factors via a simple linear regression of call/put prices, following the methodology described in \cite{azzone2021synthetic}. Specifically, for each maturity, a linear regression of the difference between observed call and put prices against their strike prices is performed. This allows us to identify the market-implied discount factor and the ATM forward price simultaneously. This step ensures that the subsequent calibration of the volatility surface is consistent (and coincides exactly) with the market-implied term structures of forwards.}

\smallskip

As already mentioned, the first two stages are able to reproduce exactly both the term structures of forwards and ATM vol; once we impose the condition $\sigma_t=\sigma_t^{ATM}/I_0$, model price $C(x,t;\mathbf{p})$ is function only of the parameters $\eta$ and $k$ \begin{equation*}
\displaystyle C^{mod}(x,t;\eta,k):=C\left(x,t;\sigma_t=\frac{\sigma_t^{ATM}}{I_0},\eta,k\right)\,=B_0\,\sigma_t^{ATM}\,\sqrt{t}\,\Cb ({\chi}; \eta, k) \,\,.
\end{equation*}
The last stage leverages on the properties of the implied vol surface (in particular, the separability 
\eqref{eq:ImpVolSeparabilityATM}
of the implied vol discussed above) and requires to 
minimize the $L^2$-distance between the quoted market prices $C^{mkt}(x_i,t_i)$ and the corresponding model prices
\begin{equation}
\label{eq:distance}
\begin{aligned}
    d(\eta, k) :&= \sum_{i}\left[ C^{mkt} (x_i,t_i) - C^{mod}(x_i,t_i; \eta, k)\right]^2=\\&\textcolor{black}{=\sum_{i}\left[ C^{mkt} (\chi_i\,\sigma_{t_i}^{ATM}\,\sqrt{t_i},t_i) - B_0\,\sigma_{t_i}^{ATM}\,\sqrt{t_i}\,\Cb ({\chi}_i; \eta, k)\right]^2} \; ,
\end{aligned}
\end{equation}
selecting the optimal values of $\eta$ and 
$k$.\footnote{The model is calibrated on both calls and puts, considering only the OTM options, thanks to the call-put parity.} \textcolor{black}{Let us emphasize, that the evaluation of the distance $d(\eta,k)$ in \eqref{eq:distance} is particularly fast for this model: it requires the computation of only $\Cb(\chi,\eta,k)$ for different values of $\chi_i$ that can be obtained applying only one fast Fourier transform.}\\
Finally, the normalized implied vol $\Ib({\chi} )$ can be obtained via the implied vol equation \eqref{eq:IVnorm}. 
The calibrated implied vol is 
\[
{\cal I}_t\left( {x} \right) = \sigma^{ATM}_t \; \Ib \left( \frac{x}{\sigma^{ATM}_t \, \sqrt{t} } \right) \; .
\]

We observe that this calibration method reproduces exactly the market ATM volatility. 
In fact, ATM $\Cb(0; \eta, k)$ is equal to $(2 \, \pi)^{-1/2}$ for every $\eta$ and $k$, 
thus $\Ib \left( 0 \right) =1$.


\bigskip


A property that describes well the implied vol surface in some markets is sticky delta.
Sticky delta describes the situation where the implied volatility remains unchanged (i.e.\ it sticks) with the moneyness when the underlying moves.

\begin{lemma} {\rm (Stickiness of the implied vol)}
\label{lem:sticky}

Additive Bachelier implied vol is sticky delta wrt $x$.
\end{lemma}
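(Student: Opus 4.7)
The plan is to trace the $F_0$-dependence through the entire pricing and implied-vol pipeline and observe that it enters only through the linear moneyness $x=K-F_0$. Once this is established, sticky delta follows because in the Bachelier formula $c_b$ of \eqref{eq:BasicFormula} the delta is a strictly monotone function of $x/(\sigma\sqrt{t})$, hence fixing delta at a given maturity is equivalent to fixing moneyness.

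First I would note that the forward dynamics \eqref{eq:ForwardDynamics} $F_s=F_0+f_s$ together with the characteristic function \eqref{eq:ChFun} yield a law for $f_t$ that depends only on $\mathbf{p}=(\sigma_t,\eta,k)$ and not on $F_0$. Consequently the call payoff rewrites as $(F_t-K)^+=(f_t-x)^+$, and both pricing formulas \eqref{eq:LewisFormula} and \eqref{eq:ModelPrice_Time} depend on strike and spot only through $x$; this is already explicit in the notation $C(x,t;\mathbf{p})$.

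Second I would apply the same observation to the Bachelier benchmark: $C_b(x,t;\sigma)$ in \eqref{eq:BachelierFormula} is by construction a function of moneyness alone. Hence the IV equation \eqref{eq:IV eq}, which by Proposition \ref{pr:Existence} admits a unique solution, implicitly defines $\mathcal{I}_t(x)$ as a function of $(x,t,\mathbf{p})$ with no separate $F_0$-dependence. Combining with the separability statement of Proposition \ref{pr:altEqCalibration}, we may write $\mathcal{I}_t(x)=\sigma^{ATM}_t\,\bar{\Ib}(\chi)$ with $\chi=x/(\sigma^{ATM}_t\sqrt{t})$, where $\sigma^{ATM}_t$ is a calibrated model quantity independent of $F_0$ and $\bar{\Ib}$ depends only on $\eta$ and $k$.

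Finally I would conclude sticky delta: under a shift $F_0\mapsto F_0'$ at fixed strike $K$, the moneyness changes from $x$ to $x'=x-(F_0'-F_0)$, and the implied vol evaluated at $x'$ is $\sigma^{ATM}_t\,\bar{\Ib}(\chi')$; in particular, at any fixed value of $\chi$ (equivalently, of the Bachelier delta $\Phi(-\chi/\bar{\Ib}(\chi))$) the implied vol is invariant under translations of $F_0$. Since none of the steps requires any computation beyond unpacking definitions, there is no real obstacle here: the lemma is essentially a restatement of the translation-invariance inherent in the additive construction $F_s=F_0+f_s$, combined with the separability already established in Propositions \ref{prop:IyTimeIndep} and \ref{pr:altEqCalibration}.
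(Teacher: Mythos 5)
Your proposal is correct and takes essentially the same route as the paper, whose proof is a one-line observation that with the parameters $\mathbf{p}$ held fixed the volatility surface is a function of moneyness alone; you simply unpack that observation by tracing the $F_0$-independence of the law of $f_t$ through the pricing and IV equations. The extra detail (and the link from moneyness to delta via monotonicity of $\Phi$) is a faithful elaboration, not a different argument.
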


This model property is a great advantage from the practitioners' perspective: 
hedges derived from this model remain stable if the market moves according to the sticky delta rule, i.e.\ both the volatility surface and prices (normalized with the discount) as functions of moneyness remain constant over time. This is another difference wrt exponential models, where only the prices normalized by the ATM forward remain constant.

In the next Section, we'll observe that while the model fits exactly the most liquid products (discount-rate, forward, and ATM vol), 
the other parameters ($\eta$ and $k$) remain relatively constant even in the high-volatile oil market of the first 
Covid months. Furthermore, to show this stability of prices with $\eta$ and $k$, we'll also keep the same value of $\eta$ and $k$ even for the next day, obtaining prices that are almost indistinguishable from the calibrated prices with $\eta$ and $k$ optimally selected. We'll also repeat the valuation even keeping the same parameters ($\eta$ and $k$) for an entire week: observed price differences remain generally within $1$-$2$ cents, well within bid-ask prices.\footnote{A trader that keeps hedged his/her portfolio, in practice hedges exposures generated (according to a given model) 
by model parameters that 
--ideally-- correspond to different risks (e.g., underlying, discount-rate, ATM vol, skew, etc...).
If these parameters remain relatively constant over time when market conditions do not change significantly,
he/she avoids paying transaction costs on trades due to exposures generated by an inadequate model description.}

\section{Calibration results}\label{sec:Results}
In this Section, we present the results of the calibration on market data of the proposed model, comparing the results with 
three natural benchmarks.

\smallskip

In Section \ref{sec:data_description}, we describe the data used in the analysis; in Sections \ref{sec:all} and \ref{sec:one_date}, we show the results of the calibration. In particular, in Section \ref{sec:all}, we show the results for all the value dates considered in this analysis. In Section \ref{sec:one_date}, we focus on a specific value date in the dataset, the $29^\mathrm{th}$ of April 2020 (one week after the \textcolor{black}{announcement} from the CME, that changed \textcolor{black}{the} reference model into Bachelier), comparing our results with market data and three benchmarks.

\subsection{Dataset}\label{sec:data_description}

The dataset is composed of daily market prices of call and put options on 
the WTI oil futures traded on the CME. 
We consider the whole period when the CME has switched the options pricing and valuation model to Bachelier \citep{CMEGroup}.
The time window of interest goes from the $23^\mathrm{rd}$ of April 2020 to the $31^\mathrm{st}$ of August 2020. 
Data are present \textcolor{black}{on} all weekdays \textcolor{black}{except} two public holidays in the USA: the $25^{\mathrm{th}}$ of May 2020 (Memorial Day) and the $3^{\mathrm{rd}}$ of July 2020 (Independence Day \textcolor{black}{holiday}), when no options were traded.
Daily prices are obtained from Bloomberg. The options are quoted for a grid of strikes that goes from 2.5 \$ to 245 \$. The grid size is equal to 0.5 \$ from 5 \$ to 146.5 \$, it's equal to 2.5 \$ from 147.5 \$ to 195 \$, and it's 5 \$ from 195 \$ to 245 \$. 
We consider all available liquid options' expiry: from June 2020 (JUN20) to December 2022 (DEC22). 
We have considered as ``risk-free" rate the Effective Federal Funds Rate OIS curve in the time window of interest up to the three years expiry.
In Table \ref{tab:refDates_strikes}, we report a summary statistics.


\begin{table}[H]
    \centering
\small{
    \begin{tabular}{|l|c|c|c|}
    \toprule
         & Min & Max & \# 
        \\
        \bottomrule
        \toprule
        Value Dates & $23^{\mathrm{rd}}$ of April 2020 & $31^{\mathrm{st}}$ of August 2020 & 91 \\
        Strikes & 2.5 \$ & 245 \$ & 314 \\
        Expiries & JUN20 & DEC22 & 9 \\
        \bottomrule
    \end{tabular}}
    \caption{\small{Value dates, grid of strikes and option expiries in the dataset. We report the number of value dates, strikes, and expiries in the dataset.}}
    \label{tab:refDates_strikes}
\end{table}

From this dataset, we first replicate discount-rates and ATM forwards from option prices, as described in \cite{azzone2021synthetic}. To do so, we need the couples of call and put options traded for the same strike and expiry, for each value date. In Table \ref{tab:put_call_couples}, we show the number of couples of options available for each value date in the period of the analysis\textcolor{black}{. We} report mean, median, standard deviation (std), and quantiles (q) 5\%, 95\%.

\begin{table}[H]
    \centering
\small{
    \begin{tabular}{|l|c|c|c|c|c|c|c|c|c|}
        \toprule
         & JUN20 & SEP20 & DEC20 & MAR21 & JUN21 & SEP21 & DEC21 & JUN22 & DEC22 \\    
        \bottomrule
        \toprule
        Mean & 148 & 104 & 100 & 19 & 22 & 4 & 24 & 3 & 7 \\
        Median & 148 & 107 & 114 & 18 & 21 & 4 & 24 & 3 & 8 \\
        Std & 0 & 17 & 30 & 7 & 5 & 2 & 5 & 1 & 2 \\
        $q_{0.05}$ & 148 & 96 & 28 & 11 & 17 & 1 & 16 & 0 & 4 \\
        $q_{0.95}$ & 148 & 114 & 117 & 31 & 30 & 6 & 29 & 3 & 9 \\
        \bottomrule
    \end{tabular}}
    \caption{\small{We report the number of call-put options' couples traded for each maturity and value date: mean, median, standard deviation (std), and quantiles (q) 5\%, 95\%\textcolor{black}{,} considering each value date.}}
    \label{tab:put_call_couples}
\end{table}

After having taken into account the ATM vol, we desire to calibrate the two parameters of the additive model from OTM options. We consider OTM call options ($K>F_0$) and OTM put options ($K<F_0$) in the range of moneyness $x\in[-30\,\$,30\,\$]$.\\
In Table \ref{tab:desc}, we provide the descriptive statistics of the number of available OTM options, for all the strikes and value dates\textcolor{black}{. We} report mean, median, standard deviation (std), and quantiles (q) 5\%, 95\%.

\begin{table}[H]
    \centering
\small{
    \begin{tabular}{|l|c|c|c|c|c|c|c|c|c|}
    \toprule
         \# OTM opt. & JUN20 & SEP20 & DEC20 & MAR21 & JUN21 & SEP21 & DEC21 & JUN22 & DEC22 \\     
         \bottomrule
         \toprule
         Mean & 92 & 116 & 105 & 58 & 49 & 13 & 49 & 10 & 24 \\
         Median & 92 & 120 & 119 & 51 & 46 & 13 & 54 & 9 & 26 \\
         Std & 9 & 7 & 27 & 25 & 7 & 3 & 9 & 1 & 6 \\
         $q_{0.05}$ & 76 & 101 & 30 & 26 & 42 & 7 & 26 & 8 & 7 \\
         $q_{0.95}$ & 103 & 120 & 120 & 94 & 63 & 17 & 56 & 11 & 28 \\
         \bottomrule
    \end{tabular}}
    \caption{\small{Number of OTM options considered in the volatility surface calibration. We report the number of OTM \textcolor{black}{calls} ($K>F_0$) and \textcolor{black}{puts} ($K<F_0$) in the moneyness range $[-30\,\$,30\,\$]$: mean, median, standard deviation (std), and quantiles (q) 5\%, 95\% considering each value date.}}
    \label{tab:desc}
\end{table}

We observe from these Tables that the market in the period of interest (from April 2020 to August 2020) was relatively illiquid. We'll notice in the following that, despite these market conditions, our model is able to describe the key features of this market adequately, with stable results across the value dates in the period of interest.

\subsection{A glimpse on all value dates}\label{sec:all}

In this Section, we present the results for all value dates in the dataset; in particular, we show the spread of 
options' discount-rates on the OIS rates, and the parameters $\eta$ and $k$ of the additive model; we present their evolution across the different value dates. \textcolor{black}{In this Section, we consider the model with the choice of $\alpha=1/2$ (results for other values of $\alpha$ are similar).}

\begin{figure}[H]
    \centering
    \includegraphics[width=1\linewidth]{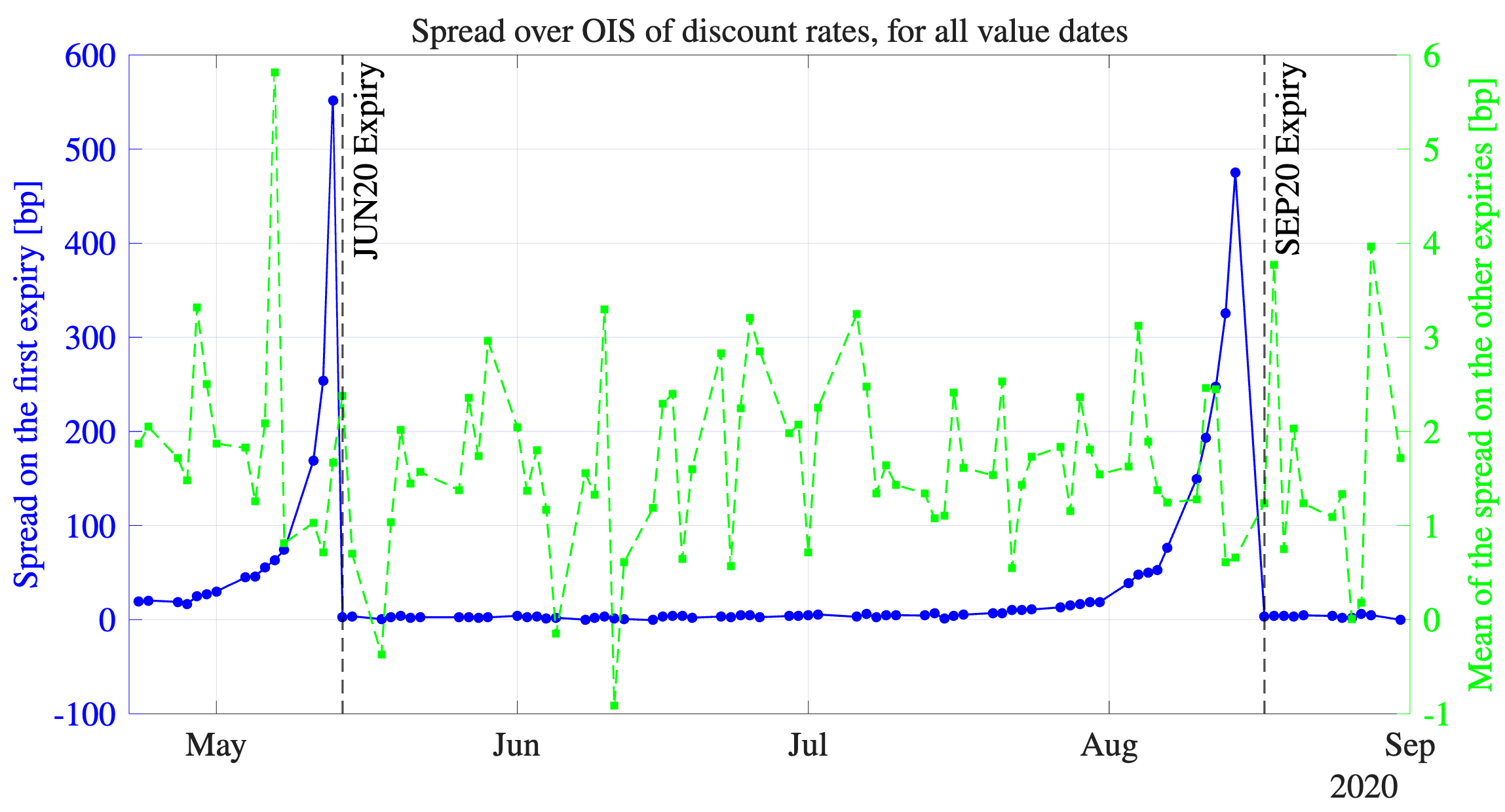}
    \caption{\small{Spread of the discount-rates (from option prices) on the OIS curve, for all value dates considered in the analysis. The vertical lines correspond to the expiries of the first two options. In blue (dots and continuous line, left axis), we can see the spread corresponding to the first maturity: we observe that it rapidly increases when approaching options' first expiry, reaching some hundreds of basis points. In green (squares and dashed line, right axis), we see the average of the spread corresponding to the other expiries: we can observe that it is of \textcolor{black}{a} few basis points (an average of $2$ bps), i.e.\ its order of magnitude is much smaller than the spread for the first expiry.}}
    \label{fig:spread_all}
\end{figure}

Figure \ref{fig:spread_all} shows the spread of the discount-rates derived from option prices on the ``risk-free" rates (OIS) for different value dates. In the blue line (dots and continuous line, left axis), we can see the spread corresponding to the first maturity: this quantity shows a rapid increase when approaching the first expiry. The average of the corresponding spread on all the other expiries is plotted in green (squares and dashed line, right axis). We observe that the latter remains of the order of \textcolor{black}{a} few basis points (on average around $2$ bps), i.e.\ negligible for all practical purposes in the volatile market of the first Covid period, and it is much smaller than the spread on the first expiry (even of $2$ orders of magnitude), that reaches some hundreds of basis points. 
The inability to reproduce even OIS rates is a clear indicator of market distress. 
The observed behaviour of option discount-rates for very short-term maturities highlights the extremely tough market conditions that the WTI oil market experienced during the Covid period.

\smallskip

For this reason, the options on the first expiry are not considered in the calibration procedure
when the discount-rates are significantly different from the ``risk-free" rates; in particular, the adopted criterion is to consider the first expiry
when the spread is up to one order of magnitude larger than the average spread (i.e.\ it is less than $20$ bps).

\smallskip

Following the cascade calibration described in Section \ref{sec:Calibration}, we first calibrate the ATM vol and then the parameters $\eta$ and $k$ of the additive model, for the different value dates, with the choice of $\alpha=1/2$ (the results for other values of $\alpha$ are similar). 
In Figures \ref{fig:eta_all} and \ref{fig:k_all}, we show these two parameters for all value dates.

\smallskip

In Figure \ref{fig:eta_all}, we show the values of the parameter $\eta$ over the period of interest. 
This parameter moves in time, but slowly, and exhibits no significant oscillations.

\smallskip

Figure \ref{fig:k_all} shows the behaviour of the parameter $k$ for the different value dates during the analysed period. The level of $k$ remains close to one up to mid July\textcolor{black}{,} where it reached  --moving slowly over time-- a level close to $0.5$ where it remained up to the end of August\textcolor{black}{,} indicating relative calmer period of the Covid-19 pandemic during the summer time.
\begin{figure}[H]
    \centering
    \includegraphics[width=1\linewidth]{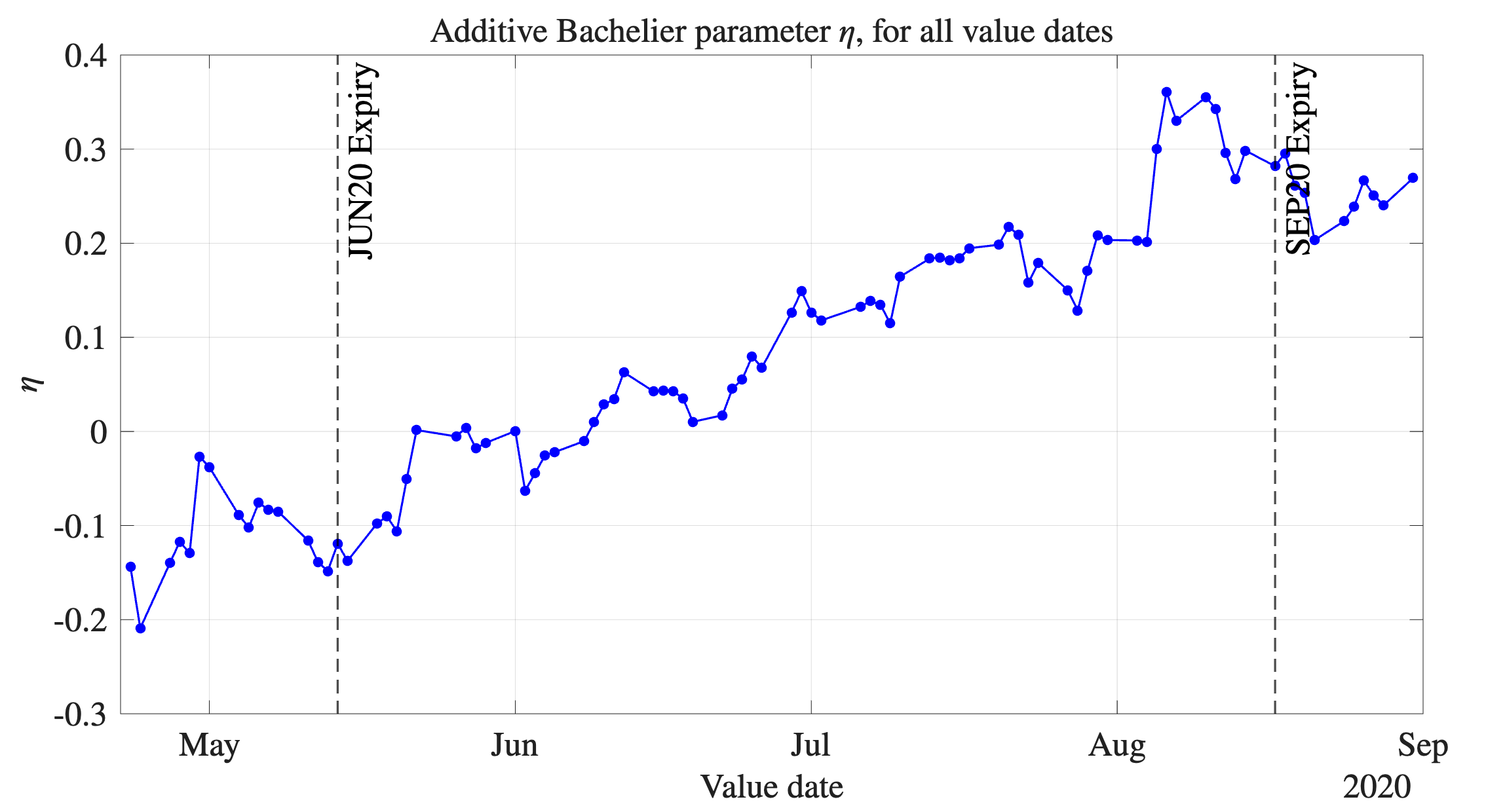}
    \caption{\small{Parameter $\eta$ of the additive Bachelier calibrated for the different value dates considered in the analysis. The vertical lines correspond to the expiries of the first two options. We observe that its value does \textcolor{black}{not} change significantly for the value dates, \textcolor{black}{and} it moves slowly over time.}}
    \label{fig:eta_all}
\end{figure}
\begin{figure}[H]
    \centering
    \includegraphics[width=1\linewidth]{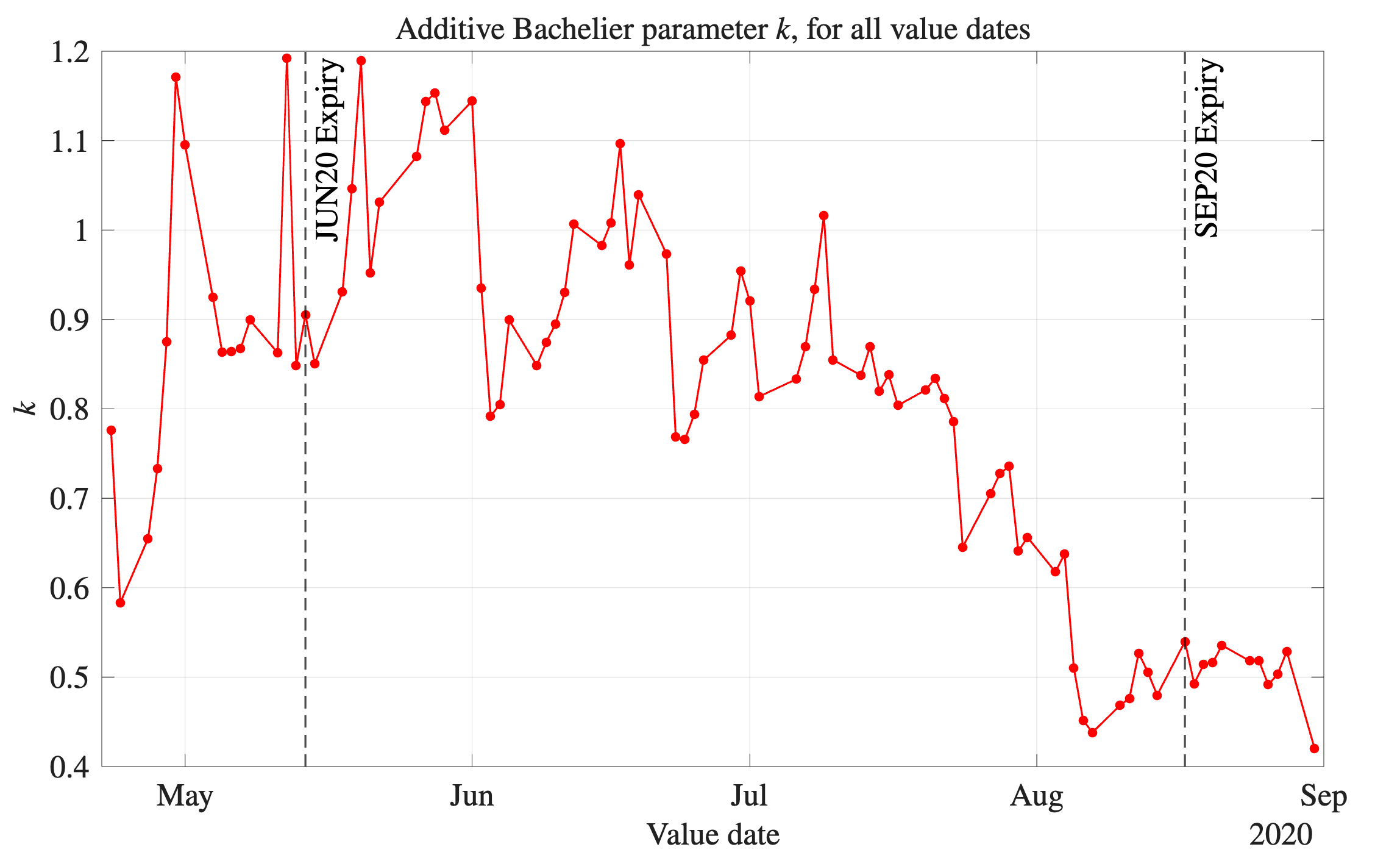}
    \caption{\small{Parameter $k$ of the additive Bachelier calibrated for the different value dates considered in the analysis. The vertical lines correspond to the expiries of the first two options. We observe that its value moves slowly over time.}}
    \label{fig:k_all}
\end{figure}

\subsubsection{Quality of calibration}
We have already commented on the exact calibration of forwards and ATM vols allowed by the cascade calibration. It is useful to underline the quality of the volatility surface description via only two parameters ($\eta$ and $k$) and their stability.
To quantitatively assess the stability of the calibrated parameters $\eta$ and $k$ over the different value dates $t_0$ present in the dataset, we consider the root mean square error (RMSE) between model and market prices, denoted respectively with $C^{mod}(x,t;\eta,k)$ and $C_{t_0}^{mkt}(x,t)$.
For each value date $t_0$ present in the dataset, we compute the error
\[
\displaystyle RMSE_{t_0}(\eta^*, k^*) :=\sqrt{\frac{1}{N_{t_0}} \sum_{i=1}^{N_{t_0}}\left[ C_{t_0}^{mkt} (x_i,t_i) - C^{mod}(x_i,t_i; \eta^*, k^*) \right]^2}\; ,
\]
where $N_{t_0}$ denotes the number of available European options on the value date $t_0$.\\
We consider the RMSE computing it on two sets of parameters $\{\eta^*,\,k^*\}$. First, we select the parameters equal to their optimal values $\{\eta_{t_0},\,k_{t_0}\}$
\[
\begin{cases}
\eta^*\leftarrow\eta_{t_0}&\\
k^*\leftarrow k_{t_0}&
\end{cases}\,\,,
\]
varying $t_0$ in the set of available value dates.\\
Then, we compute the RMSE on the pair of parameters obtained on the previous business day
\[
\begin{cases}
\eta^*\leftarrow\eta_{t_0-1}&\\
k^*\leftarrow k_{t_0-1}&
\end{cases}\,\,,
\]
and consider the difference of the two RMSE, indicated as $\Delta$RMSE in Figure \ref{fig:RMSE_robustness_day}.\\
As shown in Figure \ref{fig:RMSE_robustness_day}, the increase in error when using the previous value date's parameters is contained and always in the order of one cent or below. This fact suggests that the model parameters $\eta,\,k$ are stable over two consecutive value dates, i.e.\ their small deviations do not significantly worsen the model accuracy.

\begin{figure}[H]
    \centering
    \includegraphics[width=1\linewidth]{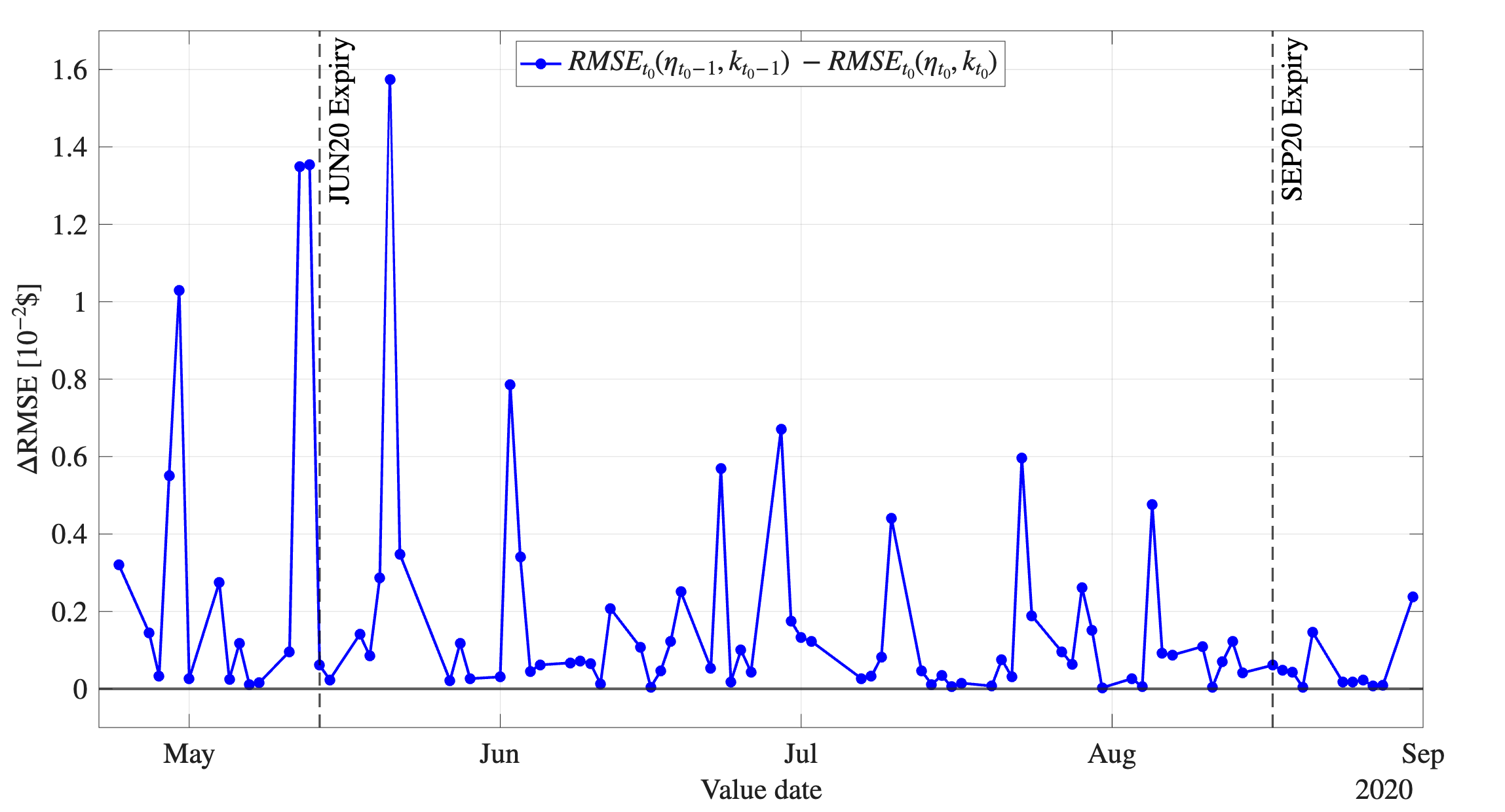}
    \caption{\small{For each value date $t_0$, \textcolor{black}{RMSE} between market prices and model prices using the parameters $\eta_{t_0},\,k_{t_0}$ calibrated for the value date $t_0$, $RMSE_{t_0}(\eta_{t_0},k_{t_0})$, and the RMSE between market prices and model prices obtained using the previous day parameters $\eta_{t_0-1},\,k_{t_0-1}$ $RMSE_{t_0}(\eta_{t_0-1},k_{t_0-1})$. The difference between the two errors is negligible for almost all dates (of the order of $1$ cent or below), confirming the stability over two consecutive value dates of the additive Bachelier model.}}
    \label{fig:RMSE_robustness_day}
\end{figure}


We further extend this analysis over a longer horizon, computing the average error over a rolling window of one week ($5$ business days). First, the values of the two parameters during one week are considered equal to the optimal ones
\[
\begin{cases}
\eta^*_s\leftarrow\eta_{s}&\\
k^*_s\leftarrow k_{s}&
\end{cases}\quad s=t_0-4,...,t_0\,\,.
\]
Then, the two parameters are kept constant over one week and equal to the ones obtained one week before
\[
\begin{cases}
\eta^*_s\leftarrow\eta_{t_0-5}&\\
k^*_s\leftarrow k_{t_0-5}&
\end{cases}\quad s=t_0-4,...,t_0\,\,.
\]
Specifically, we compare the average error using the parameters $\eta_s,\,k_s$, for each value date $s\in\left[t_0-4,\,t_0\right]$,
\[
\displaystyle RMSE_{t_0-4:t_0}(\eta_{t_0-4:t_0}, k_{t_0-4:t_0}) :=\sqrt{\frac{1}{\sum_{s=t_0-4}^{t_0}N_s}\sum_{s=t_0-4}^{t_0}\sum_{i=1}^{N_s}\left[ C_{s}^{mkt} (x_i,t_i) - C^{mod}(x_i,t_i; \eta_s, k_s) \right]^2} \; ,
\]
with the error obtained using fixed parameters $\eta_{t_0-5},\,k_{t_0-5}$ in the whole week,
\[
\displaystyle RMSE_{t_0-4:t_0}(\eta_{t_0-5}, k_{t_0-5}) :=\sqrt{\frac{1}{\sum_{s=t_0-4}^{t_0}N_s} \sum_{s=t_0-4}^{t_0}\sum_{i=1}^{N_s}\left[ C_{s}^{mkt} (x_i,t_i) - C^{mod}(x_i,t_i; \eta_{t_0-5}, k_{t_0-5}) \right]^2} \; ,
\]
calibrated on the value date the day before the beginning of the window. Then, we consider the difference $\Delta$RMSE between the two \textcolor{black}{RMSEs}. Results are presented in Figure \ref{fig:RMSE_robustness_week}, and continue to suggest that the additive Bachelier model maintains robust pricing performance even when the same parameters are used across 
all consecutive business days within a week. We observe that the $\Delta$RMSE across all value dates considered in the analysis remains consistently within a few cents of a dollar.\\
As already mentioned\textcolor{black}{,} commenting \textcolor{black}{on} Tables \ref{tab:put_call_couples} and \ref{tab:desc}, the WTI derivative market was relatively illiquid in the time window of interest.
We highlight that, despite the observed high volatility and low liquidity market, it is possible to achieve stable results over time in the calibration of the additive Bachelier parameters ($\eta$ and $k$). In the following Section, 
we discuss the calibration results achieved on a single value date. 

\smallskip

This fact suggests that the additive Bachelier model exhibits not only parsimony but also temporal stability, a crucial property in a volatile and relatively illiquid market like the WTI oil option market during the first period of the Covid pandemic.
\begin{figure}[H]
    \centering
    \includegraphics[width=1\linewidth]{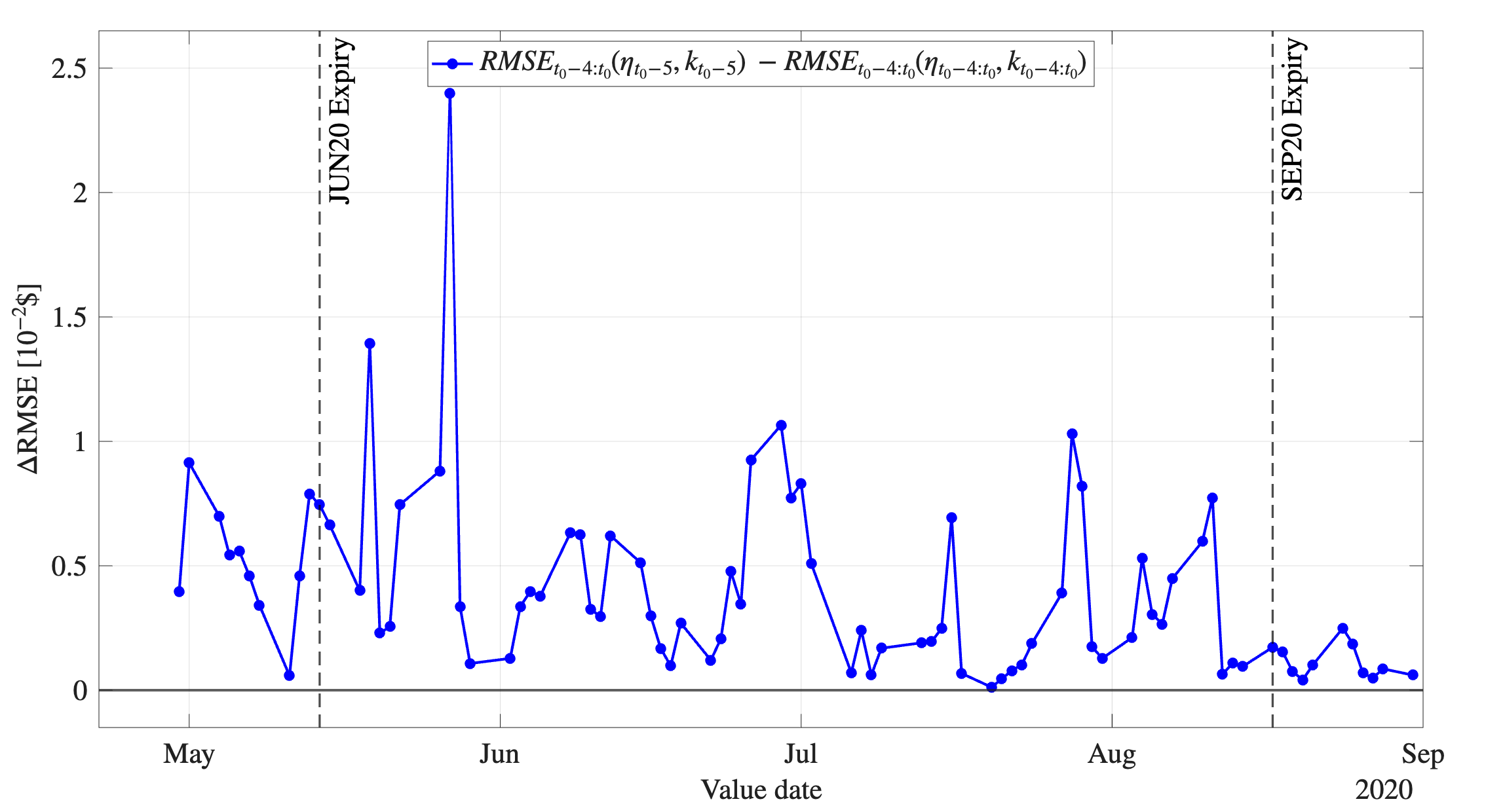}
    \caption{\small{Difference of RMSE over a $5$-day window using fixed versus \textcolor{black}{optimal} parameters. We compare the $5$-day RMSE between market prices and model prices using the parameters $\eta_s,\,k_s$ for the each value date $s\in\left[t_0-4,\,t_0\right]$, $RMSE_{t_0-4:t_0}(\eta_{t_0-4:t_0}, k_{t_0-4:t_0})$, against the $RMSE_{t_0-4:t_0}(\eta_{t_0-5}, k_{t_0-5})$ obtained using the fixed parameters $\eta_{t_0-5},\,k_{t_0-5}$, calibrated before the beginning of the window. We observe a contained increase in the calibration error of about $1$ cent or below, with the only exception of the $27^{\mathrm{th}}$ of May 2020, when it is less than $2.5$ \textcolor{black}{cents}. This fact highlights the robustness of the additive Bachelier model’s parameters.}}
    \label{fig:RMSE_robustness_week}
\end{figure}

\subsection{Detailed results for one date: the $\mathbf{29}^\mathbf{th}$ of April 2020}\label{sec:one_date}

In this Section, we show the results with value date on the $29^\mathrm{th}$ of April 2020 (one week after the \textcolor{black}{announcement} from the CME, in one of the most volatile periods),\footnote{Results do not appear significantly different on all other value dates and they are available upon request.} compared with the results of other three models considered as benchmarks:
\begin{itemize}
	\item the L\'evy Bachelier model (hereinafter ``L\'evy model"), defined in (\ref{eq:ChFunLevy}), Appendix \ref{app:Bachelier_formula},
	\item a L\'evy Bachelier model that includes options with one ttm at a time (named ``slice-by-slice"),
	\item an additive logistic model \citep[][]{CarrTorricelli2021}.
\end{itemize}

The first benchmark is the simplest generalization of Bachelier model that allows for a volatility surface, 
while the second is the same model but calibrated considering separately the options with a given ttm: the latter utilizes a different set of parameters for each maturity and it is not in general a coherent model for the whole volatility surface but --as we'll show in this Section-- it reproduces quite accurately the market volatility surface.  The last one is a recent model with an additive process applied to a real-valued underlying.

\smallskip

In Figure \ref{fig:sintRates}, we compare the discount-rates (from option prices) with the ``risk-free" rates from the Effective Federal Funds Rate OIS curve. We see that the spread on the first expiry is larger than 20 basis points, while the spread corresponding to the other expiries is much smaller.

\smallskip

As already mentioned, since the spread on the first date is larger than $20$ \textcolor{black}{bps}, we do not consider the first expiry in the next steps of the calibration.

\begin{figure}[H]
    \centering
    \includegraphics[width=1\linewidth]{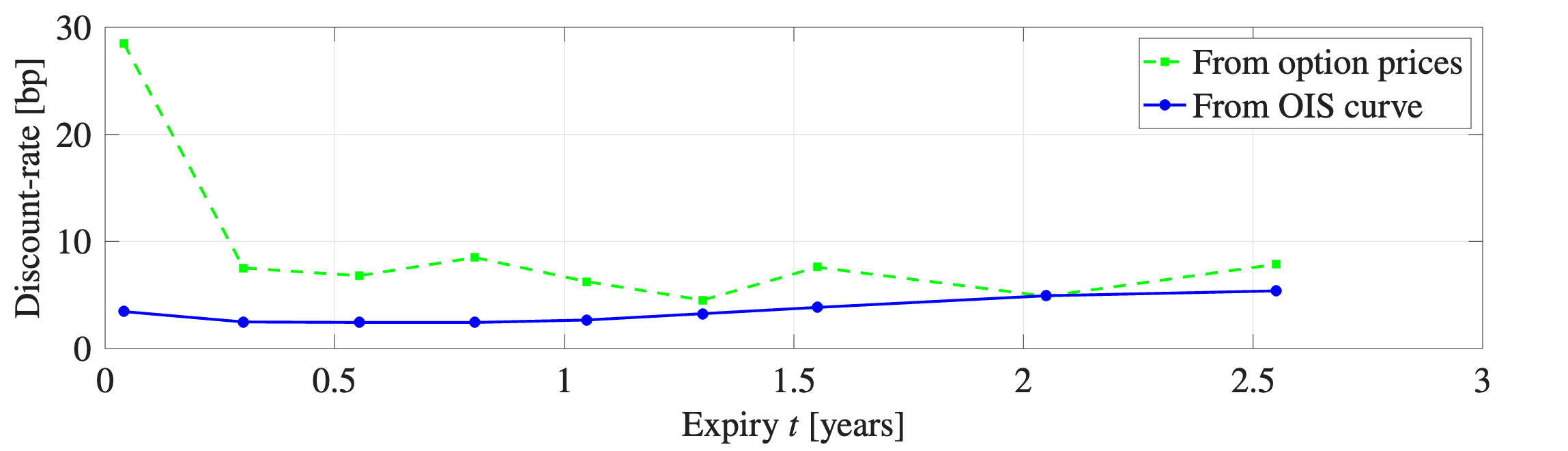}
    \caption{\small{Discount-rates from option prices (in green, squares and dashed line), and the ``risk-free" rates from the OIS curve (in blue, dots and continuous line) observed in the market on the $29^\mathrm{th}$ of April 2020.}}
    \label{fig:sintRates}
\end{figure}

After replicating discount-rates and ATM forward, we consider the ATM volatility.

\smallskip

In Figure \ref{fig:ATM_vol_An2}, we plot the ATM volatility from market implied volatility (red, squares and continuous line) with the ATM volatility from the calibration of the L\'evy model (black dashed line), the additive logistic (cyan continuous line), the slice-by-slice (green dot-dashed lines) and the additive Bachelier model (blue dots and continuous line). The slice-by-slice and the additive Bachelier model reproduce exactly the ATM volatility term structure, while both the L\'evy model and the additive logistic are not able to replicate the ATM volatility term structure.

\begin{figure}[H]
    \centering
    \includegraphics[width=1\linewidth]{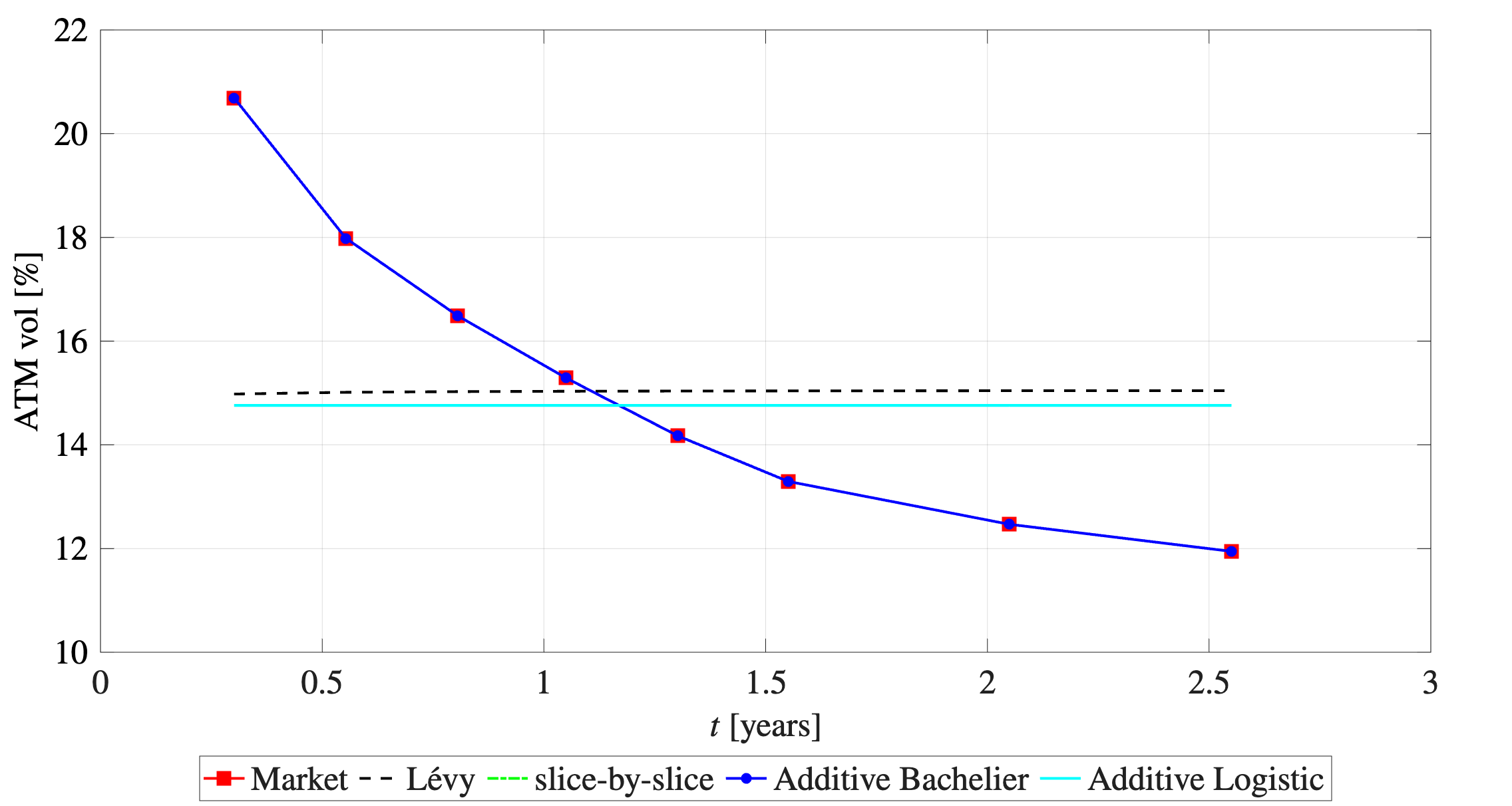}
    \caption{\small{ATM volatility comparison between market data (red, squares and continuous line), the additive model (in blue, dots and continuous line), the slice-by-slice (green, dots dashed line), the L\'evy model (black, dashed line) and the additive logistic (cyan, continuous line), on the $29^\mathrm{th}$ of April 2020, for $\alpha=1/2$.
While the first three are indistinguishable in the Figure\textcolor{black}{,} with both the slice-by-slice and the additive Bachelier replicating exactly \textcolor{black}{the} market ATM vol, both
the L\'evy model and the additive logistic are not able to reproduce the ATM vol.}}
    \label{fig:ATM_vol_An2}
\end{figure}

Then, we can calibrate the implied volatility surface for the models discussed. In Figure \ref{fig:surf_a_fette_An2}, we plot the volatility smile for all option maturities (without considering the first one), with value date the $29^\mathrm{th}$ of April 2020. We consider all options with moneyness $x$ in the range $[-30\,\$,30\,\$]$. We show the results of the L\'evy Bachelier (black dashed lines), the slice-by-slice (green dot-dashed lines), the additive logistic (cyan continuous lines), and the additive Bachelier (blue dots), considering $\alpha=1/2$. We observe that results from the L\'evy model and the additive logistic model cannot be considered adequate, since they fail to replicate the volatility smile in all eight expiries. The calibration provided by slice-by-slice replicates market data accurately. Finally, the additive Bachelier provides a quite good calibration of the implied volatility surface over all considered expiries despite its parsimony.

\begin{figure}[H]
    \centering
    \includegraphics[width=1\linewidth]{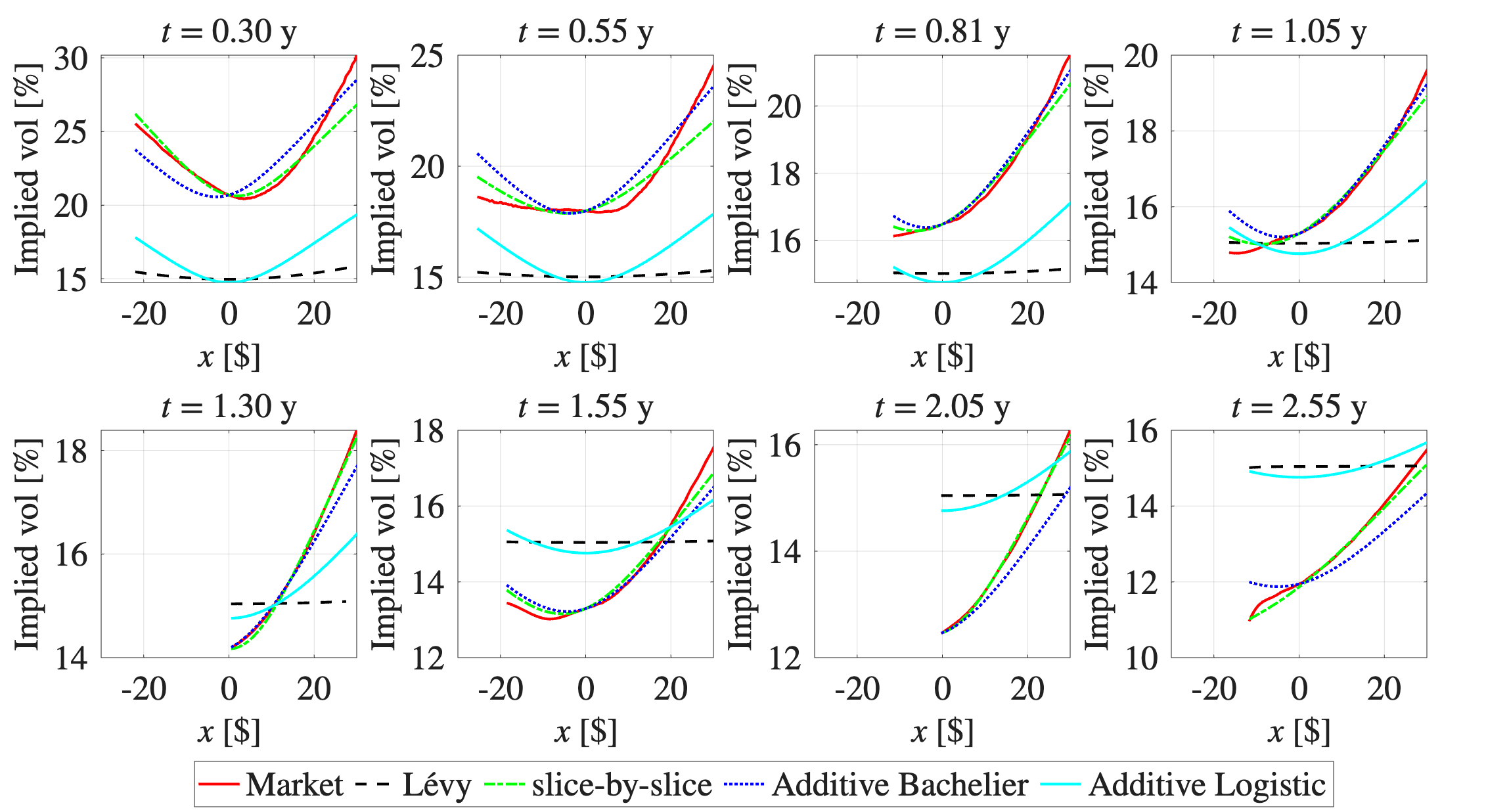}
    \caption{\small{Volatility surface. We plot the volatility smile for each option maturity present in the dataset after the first one, 
    for the value date $29^\mathrm{th}$ of April 2020. We consider all options with moneyness $x$ in the interval $[-30\,\$,30\,\$]$, for some maturities this moneyness interval is even larger than the observed market prices. Market implied vol (red, continuous lines) is plotted together with the three models discussed in the text: the L\'evy Bachelier (black, dashed lines), the L\'evy that includes options with one ttm at time (slice-by-slice, green, dot-dashed lines), the additive logistic model (cyan, continuous lines) and the additive Bachelier (blue, dots). While results for the L\'evy and additive logistic model cannot be considered adequate, the calibration provided by slice-by-slice replicates accurately market data, and the additive Bachelier provides values quite close to the latter. We consider $\alpha=1/2$.}}
    \label{fig:surf_a_fette_An2}
\end{figure}

We compare the results for the four models observing the \textcolor{black}{RMSE} of model prices wrt market prices separately for each maturity $t$ (with moneyness in the interval $[-30\,\$,30\,\$]$). We plot in Figure \ref{fig:ATM_RMSE_An2} the RMSE from each model for all the expiries (except for the first one) present in the dataset.

\smallskip

The best fit is achieved by the slice-by-slice description. However, it is not a coherent model for the whole volatility surface, as previously stated.

\smallskip

We observe that the error from the L\'evy model and the additive logistic are almost one order of magnitude larger than the RMSE of the slice-by-slice, while the error of the additive Bachelier (in blue, dots and continuous line) is quite close to the one of the slice-by-slice for expiries up to one year, becoming larger for longer expiries.

\begin{figure}[H]
    \centering
    \includegraphics[width=1\linewidth]{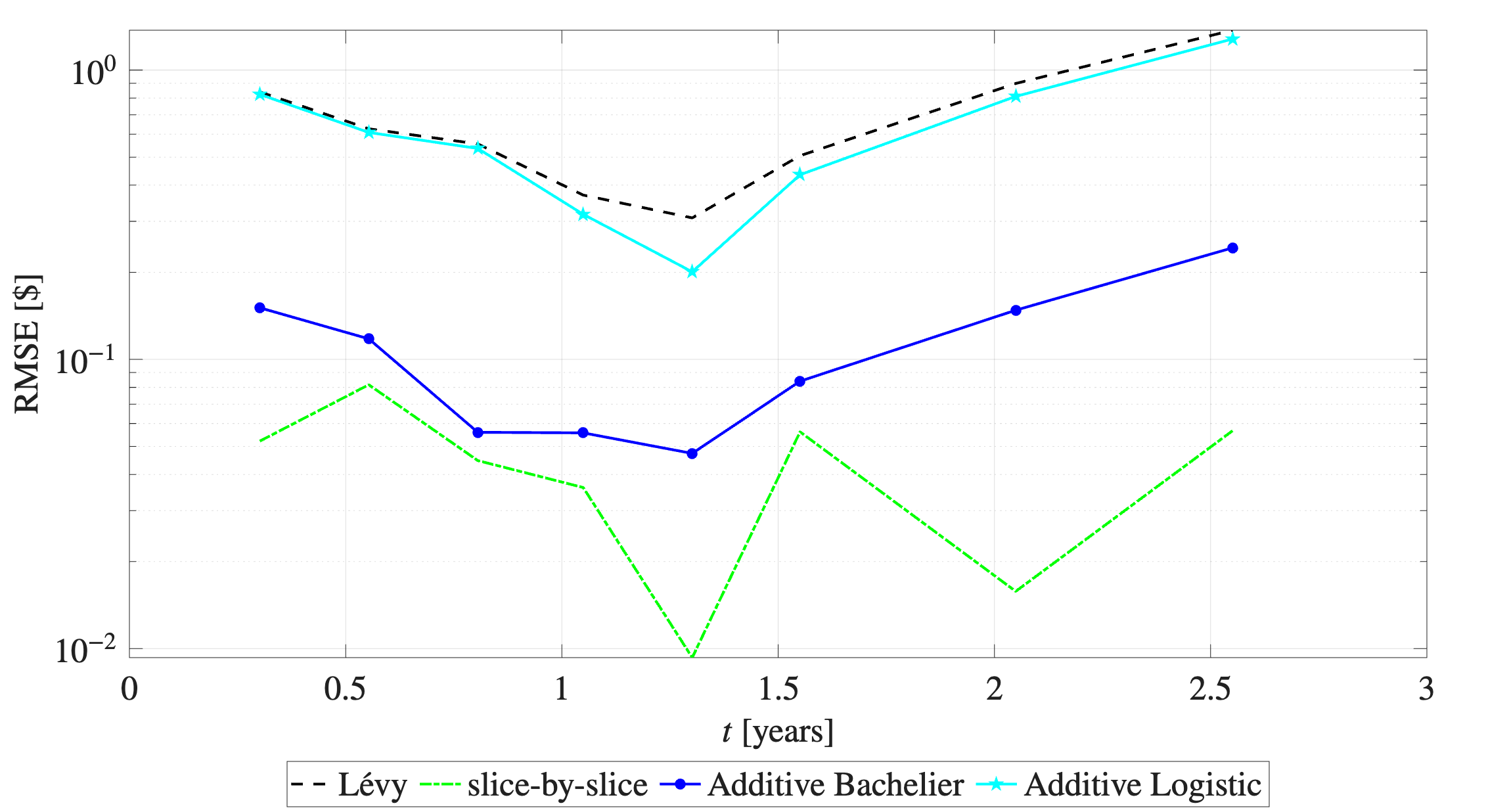}
    \caption{\small{Comparison of the RMSE obtained when calibrating the L\'evy model (in black, dashed line), the slice-by-slice (green, dots dashed line), the additive logistic model (cyan, stars and continuous line) and the additive Bachelier (blue, dots and continuous line) in the $29^\mathrm{th}$ of April 2020, for $\alpha=1/2$. The L\'evy model and the additive logistic present an RMSE almost one order of magnitude larger than the slice-by-slice, while RMSE for additive Bachelier is rather close (especially for options with maturity up to $1$ year and a half) to the slice-by-slice.}}
    \label{fig:ATM_RMSE_An2}
\end{figure}

\smallskip

Figure \ref{fig:parameters_An2} shows the calibrated parameters $\eta$ and $k$ of the additive Bachelier (blue, continuous, constant lines). The green squares (with dashed lines) correspond to the parameters of the slice-by-slice, for all the expiries. The close alignment between the two kinds of parameters suggests that the additive Bachelier parameters exhibit consistency across maturities.

\begin{figure}[H]
    \centering
    \includegraphics[width=1\linewidth]{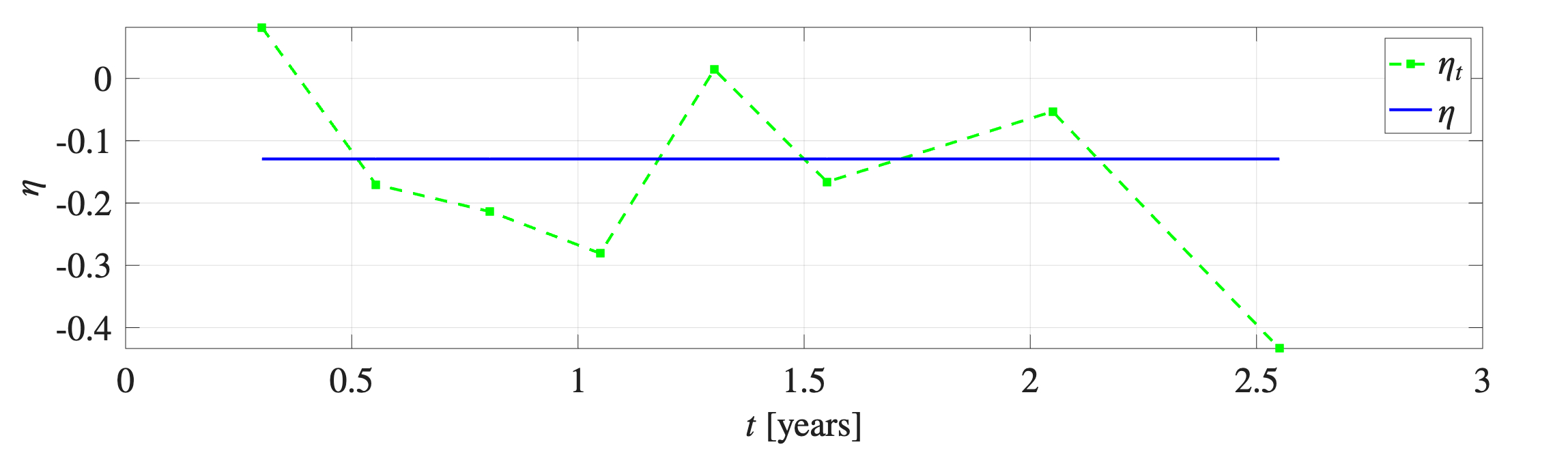}
    \includegraphics[width=1\linewidth]{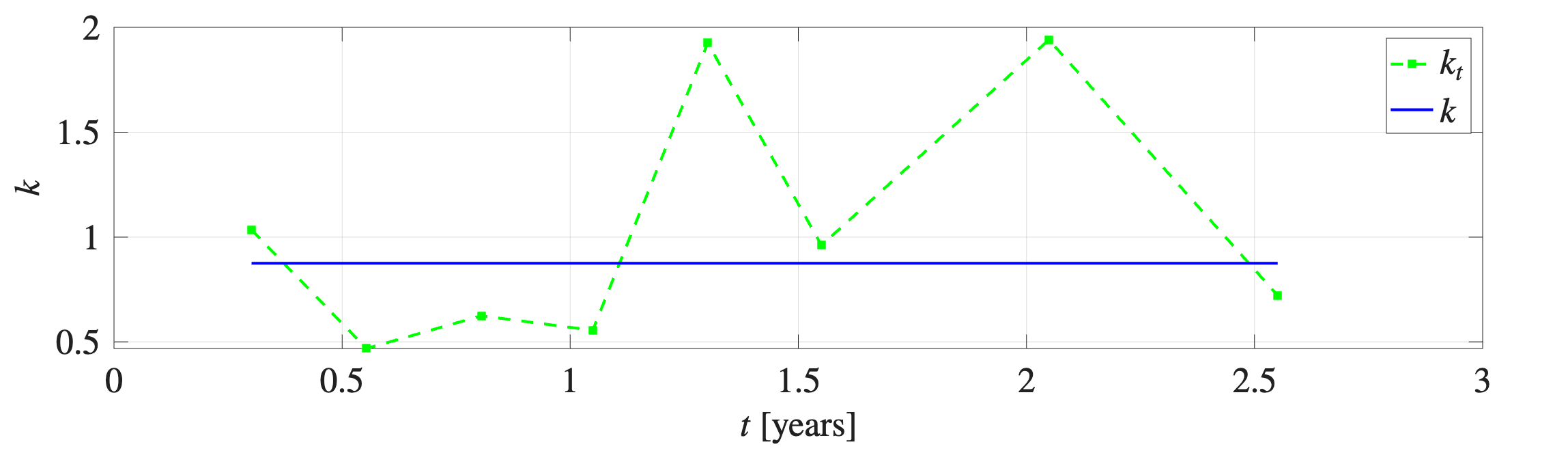}
    \caption{\small{Parameters $\eta,k$ of the additive Bachelier (blue continuous lines), compared with the parameters $\eta_t,\,k_t$ of the slice-by-slice. They appear quite similar for all options after the first expiry traded in the market. The top panel shows the parameter  $\eta$, and the bottom panel shows the parameter $k$.}}
    \label{fig:parameters_An2}
\end{figure}

\section{Conclusions}
\label{sec:Conclusions}
This paper has introduced the additive Bachelier, an extension of the Bachelier model, to describe the 
WTI oil options.

\smallskip

The proposed model allows for negative prices, and it is very simple from an analytical point of view, 
allowing for explicit expression of the implied vol close to ATM and for very large moneyness. 
Key features of the model include:
\begin{itemize}
    \item A closed-form solution for European options pricing, the Bachelier-Lewis formula.
    \item A cascade calibration approach that prioritizes the most liquid instruments, ensuring consistency in the pricing of ATM forward prices, ATM volatilities, and the implied volatility surface.
\item The ability to synthesize the entire volatility surface using just two parameters, each of them with a clear financial interpretation, facilitating practical application.
\item The robustness of the model's parameters across different value dates, that ensures consistent model accuracy over time.
\end{itemize}

The calibration results (discussed in Section \ref{sec:Results}) highlight the robustness of the proposed model over a very volatile period as the first period of the Covid-19 pandemic, ensuring a stable calibration for different value dates in the whole period from April 2020 to August 2020, and reproducing quite closely key market features.

\smallskip

\textcolor{black}{While this study focuses on the WTI oil market during the Covid period, the proposed modelling framework admits natural extensions to other asset classes and market conditions, which are left as future developments for this research. First, its ability to handle negative prices makes it suitable for other commodities such as natural gas \citep[e.g.,][]{CMEGroup} and electricity \citep[e.g.,][]{ice2024bachelier}, which can experience extreme downside volatility or near-zero prices. Second, given its roots in the Bachelier framework, the model is particularly well-suited for interest rate derivatives (e.g., swaptions, caps/floors) in low or negative interest rate environments, offering a flexible volatility smile description beyond the classic normal model. Furthermore, the parsimony of the model --relying on only a few parameters-- suggests potential scalability to multi-curve or multi-commodity settings, where calibration speed is crucial. Finally, the additive process structure naturally accommodates time-inhomogeneity, paving the way for dynamic extensions that can capture regime shifts, a relevant feature for energy markets undergoing structural changes.}

\section*{Acknowledgements}
We thank all participants to the EFI10 Conference, the Advances in Mathematical Finance Conference in Freiburg,  the 12th General AMaMeF Conference and the XLIX Annual Conference of the AMASES, in particular Giovanni Amici, Laura Ballotta, Òscar Burés, Ernst Eberlein, Friedrich Hubalek and Ioannis Kyriakou for comments and discussions.
\section*{Appendix \ref{app:Bachelier_formula}: Bachelier formula, L\'evy Bachelier and its generalization}
\customlabel{app:Bachelier_formula}{A}
\subsection*{Bachelier formula basic properties}

We briefly recall some basic properties of the normalized Bachelier formula $ c_b (y, \sigma)$ defined in \eqref{eq:BasicFormula}. It can be obtained as
\[
c_b (y, \sigma) = \E[ - \sigma g -y]^+
\]
with $g$ a standard normal rv.

It is a positive homogeneous function (of degree one)
\be
 c_b(\lambda \, y, \lambda \, \sigma)  =  \lambda \, c_b(y, \sigma) \quad  \forall \lambda>0 \;\; .
\label{eq:hom}
\en

The time-value, defined as the difference between price and payoff, is
\[
TV(y,\sigma) =- |y| \, \Phi \left( - \frac{|y|}{\sigma} \right) + \sigma \, \varphi \left( - \frac{|y|}{\sigma} \right)\,\,,
\]
it is symmetric \textcolor{black}{wrt} the \textcolor{black}{moneyness degree}.

\subsection*{Bachelier Greeks}

\be
\label{eq:Vega}
\begin{array}{lp{2cm}rcccl}
{\rm Delta} & & \Delta_b & := & \displaystyle \frac{\partial}{\partial (-y)} c_b (y, \sigma) & = & \displaystyle  \Phi \left( - \frac{y}{\sigma} \right) \;\; , \\[3mm]
{\rm Vega} & & {\cal V}_b & := & \displaystyle  \frac{\partial}{\partial \sigma} c_b (y, \sigma) & = & \displaystyle  \varphi \left( - \frac{y}{\sigma} \right) \;\; , \\[3mm]
{\rm Gamma} & & \Gamma_b & := & \displaystyle  \frac{\partial^2}{\partial (-y)^2} c_b (y, \sigma)  & = & \displaystyle \frac{1}{\sigma} 
\varphi \left( - \frac{y}{\sigma} \right)  \;\; , \\[3mm]
{\rm Vanna} & & vanna_b & := & \displaystyle   \frac{\partial}{\partial (-y)}
\frac{\partial}{\partial \sigma} 
c_b (y, \sigma)   & = & \displaystyle \frac{y}{\sigma^2}
\varphi \left( -  \frac{y}{\sigma}\right)   \;\; .
\end{array}
\en

\subsection*{L\'evy Bachelier model}
The Bachelier model is not able to describe any volatility surface.
The simplest generalization is a Bachelier L\'evy model defined through its normal tempered stable characteristic function
\begin{equation}
\label{eq:ChFunLevy}
	\ln \phi_t(u):= \ln \mathbb{E}_0\left[e^{ i \, u \, f_t}\right]=
	t\left(\psi\left(i \, u \, \hat{\eta} \,  \hat{\sigma}
+\frac{u^2 }{2} \, \hat{\sigma}^2 ;  \hat{k}, \; \alpha \right) \, + \,
 i\, u \, \hat{\eta}\,\hat{\sigma} \right)\;\;,
\end{equation}
with $\hat{\eta} \in \mathbb{R}$ and $\hat{\sigma}, \hat{k} \in \mathbb{R}^+$\textcolor{black}{, and $\psi(\bullet;k,\alpha)$ is defined in \eqref{eq:laplaceG}.}
This model is the generalization to the linear (Bachelier) case of the normal tempered stable 
 exponential L\'evy model that can be found in excellent textbooks \citep[see, e.g.,][]{Cont}.
 
This model, unfortunately, is not able to reproduce the observed volatility surface \textcolor{black}{adequately}, as also shown in Section \ref{sec:Results}.
However, a L\'evy model that is calibrated on one single maturity at a time (\textcolor{black}{also} called slice-by-slice) is able to calibrate in an excellent way
even the oil derivative market in the most turbulent period  that has been observed so far: 
 the first months of the Covid pandemic.
Clearly, this model has a set of parameters that vary at each maturity and it is not necessarily well posed.

For this reason, we would like to consider the natural inhomogeneous extension of the L\'evy process case \eqref{eq:ChFunLevy}, where the parameters are time dependent, as described in the following Lemma.
The proof of the following Lemma is analogous to \textcolor{black}{that} of Theorem 2.1 in \citet[][Thm.2.1]{ATS}, valid for exponential additive processes\textcolor{black}{, and it is provided in Appendix \ref{app:Proof_1}.}
{\setcounter{lemma}{0}
\renewcommand{\thelemma}{A.\arabic{lemma}}
\begin{lemma}{\rm (A class of additive processes $\{ f_t \}$)} 
\label{lem:f_AdditiveGen}\\
Given the continuous deterministic functions of time $\sigma_t$ strictly positive and such that $\sigma_t^2t$ goes to zero for $t\rightarrow0^+$, $\eta_t \in \R$ and $k_t \in \R^+$, and $\alpha\in [0,1)$,
the class of processes $\{ f_t \}_{t\ge 0}$ with characteristic function exponent
\begin{equation}
\label{eq:ChFunGeneral}
	\ln \phi_t(u):= \ln \mathbb{E}\left[e^{ i \, u \, f_t}\right]=\psi\left(i \, u \, \eta_t \, {\sigma }_t  \, \sqrt{t}   
+\frac{u^2 }{2} \, {\sigma }_t^2 \, t ; \; k_t, \; \alpha \right) \, + \,
 i\, u \, \eta_t \, {\sigma }_t \, \sqrt{t}     \;\;,
\end{equation}
\textcolor{black}{where the function $\psi$ is defined as in \eqref{eq:laplaceG},}
is additive when the following conditions hold true: 	
\begin{enumerate}
\item The functions  
\[
\begin{array}{lcl}
p^\pm_t  &:=& \displaystyle \frac{1}{\sigma_t \sqrt{t}} \left\{ \displaystyle \lr \eta_t + \sqrt{\eta_t^2+2 \, \frac{1-\alpha}{k_t} }  \right\}\\[3mm]
p^*_t      &:=& \displaystyle -\frac{\sigma_t\sqrt{t}}{k_t^{(1-\alpha)/\alpha}}\sqrt{\eta_t^2+ 2\,\frac{1-\alpha}{ k_t}}\;\;
\end{array}
\]
are non-increasing;
\item \(\displaystyle \sqrt{t} \,\sigma_t\, \eta_t \) and \(\displaystyle \sqrt{t}\,\sigma_t \, \eta_t \, {k_t^{(\alpha -1)/\alpha}} \) go to zero as $t$ goes to zero.
\end{enumerate}
\end{lemma}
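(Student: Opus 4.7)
The plan is to invoke Sato's characterization of Additive processes (Sato 1999, Theorem 9.8), according to which a stochastic process $\{f_t\}_{t\ge 0}$ with $f_0=0$ is Additive in law if and only if its one-dimensional marginals are infinitely divisible, the ratios $\phi_t(u)/\phi_s(u)$ are themselves characteristic functions of infinitely divisible laws for every $0\le s\le t$, and $\phi_t(u)\to 1$ as $t\to 0^+$. Equivalently, writing the L\'evy--Khintchine triplet of $\phi_t$ as $(A_t,\nu_t,\gamma_t)$, one needs $A_t$ non-decreasing in $t$, $\nu_t(B)$ non-decreasing in $t$ for every Borel set $B$ bounded away from the origin, continuity in $t$, and vanishing at $t=0^+$. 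The martingale property will follow from the compensating drift $iu\eta_t\sigma_t\sqrt{t}$ already built into \eqref{eq:ChFunGeneral}.

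First I would identify the L\'evy triplet of $\phi_t$ explicitly. Since $\psi(\cdot;k,\alpha)$ is the Laplace exponent of a positive infinitely divisible random variable $G_{k,\alpha}$ (a tempered stable law for $0<\alpha<1$ and a gamma law for $\alpha=0$), the form \eqref{eq:ChFunGeneral} coincides with that of a Brownian motion of drift $\eta_t\sigma_t\sqrt{t}$ and variance $\sigma_t^2 t$ subordinated by $G_{k_t,\alpha}$, plus a linear compensator. Using the standard subordination formula, I would write $\nu_t$ as an absolutely continuous measure on $\R\setminus\{0\}$ whose density has the characteristic exponential tails $e^{-p_t^+ x}$ for $x>0$ and $e^{-p_t^-|x|}$ for $x<0$, multiplied by a common prefactor depending on $\sigma_t, \eta_t, k_t, \alpha$ that can be read off as a monotone function of $p_t^*$. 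The diffusion coefficient $A_t$ vanishes because the process is pure jump, so that Sato condition is automatic.

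Next I would verify the Sato conditions. The requirement $\nu_s(B)\le \nu_t(B)$ for $s\le t$ and every Borel $B\subset\{|x|>\varepsilon\}$ is equivalent to the two statements that (a) the exponential decay rates $p_t^\pm$ are non-increasing in $t$, allowing the jump tails to become heavier with time, and (b) the common prefactor, which is a monotone function of $p_t^*$, is non-increasing. These are precisely condition 1 of the lemma. The continuity of $\nu_t$ and of $\gamma_t$ in $t$ is inherited from the continuity of $\sigma_t,\eta_t,k_t$. Finally, stochastic continuity at the origin reduces to $\phi_t(u)\to 1$ for each fixed $u$, which via the asymptotic expansion of $\psi$ boils down to the quantities $\sigma_t^2 t$, $\sqrt{t}\sigma_t\eta_t$ and $\sqrt{t}\sigma_t\eta_t\,k_t^{(\alpha-1)/\alpha}$ all vanishing as $t\to 0^+$: the first is the standing hypothesis on $\sigma_t$, while the latter two are exactly condition 2.

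The main obstacle will be obtaining the L\'evy density in a sufficiently transparent closed form to read off the dependence of its exponential decay rates on $p_t^\pm$ and of its overall scale on $p_t^*$. Concretely, for $0<\alpha<1$ one has to invert the tempered stable Laplace exponent to recover the subordinator's L\'evy density, convolve it against a Gaussian kernel through the subordination identity, and then isolate the $|x|\to\infty$ exponential rates; the case $\alpha=0$ is handled analogously starting from a gamma subordinator. Once this representation is in hand, matching parameters with the definitions of $p_t^\pm$ and $p_t^*$ in the statement is a bookkeeping check, and the monotonicity hypotheses translate directly into the Sato conditions, completing the verification that $\{f_t\}$ is Additive.
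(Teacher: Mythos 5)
Your proposal follows essentially the same route as the paper: identify the L\'evy--Khintchine triplet of \eqref{eq:ChFunGeneral} via the subordinated-Brownian-motion representation (the paper writes the jump density explicitly in terms of a modified Bessel function $K_{\alpha+1/2}$, with exponential tilts governed by $p_t^\pm$ and an overall scale governed by $p_t^*$), then check Sato's conditions by showing $\nu_t$ is pointwise non-decreasing in $t$ under condition 1 and that $\nu_t$ and $\gamma_t$ vanish as $t\to0^+$ under condition 2. The only caveat is that the step you flag as the ``main obstacle'' --- exhibiting the density concretely enough to verify \emph{pointwise} monotonicity in $t$, not merely monotonicity of the asymptotic decay rates --- is precisely where the paper does its real work, via the integral representation of the Bessel function.
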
}

\smallskip

The L\'evy Bachelier model is the simplest case of the additive process \eqref{eq:ChFunGeneral}, that can be obtained by choosing 
\[
\sigma_t= \hat{\sigma}  ,  \, \eta_t = \hat{\eta} \sqrt{t}  , \, k_t = \frac{ \hat{k} }{t} \;\; .
\]

In this case, condition $1$ of Lemma \ref{lem:f_AdditiveGen} is satisfied because $p^\pm_t$ are constant in time, while $p^*_t$ is decreasing, and
condition $2$ holds; thus, the process is additive. 
Moreover, the process \textcolor{black}{has stationary increments} because the log-characteristic function \eqref{eq:ChFunLevy} is linear in $t$ \textcolor{black}{\citep[see, e.g.,][Def.3.1, Sec.3.1]{Cont}}.

\smallskip

The  additive Bachelier presented in this paper is another particular case of the additive process \eqref{eq:ChFunGeneral} that, even being very parsimonious, i) presents the interesting analytical properties described in this study and ii) is able to obtain calibration results quite close to the more general 
(but not necessarily well posed) slice-by-slice\textcolor{black}{,} that has time dependent parameters that vary at each maturity. In the additive Bachelier, $\sigma_t$ is s.t.\ i) $\sigma_t^2 t$ is monotone (property naturally satisfied by the ATM vol in real markets), ii) $\sigma_t^2 t$ goes to zero as $t\to0^+$, and
\[
\eta_t = \eta, \, k_t = k \;\; .
\]
This observation allows us to prove Proposition \ref{th:f_Additive}, as shown in Appendix \ref{app:Proof_1}.

\smallskip

Finally, let us observe that
the Bachelier model corresponds to the limiting case of the additive Bachelier for $k \to 0^+$.

\section*{Appendix \ref{app:Proof_1}: Proofs}
\customlabel{app:Proof_1}{B}
\begin{proof} {\bf Lemma \ref{lem:f_AdditiveGen}} 


At any given time $t>0$, we have the expression for the generating triplet for a process with characteristic function exponent (\ref{eq:ChFunGeneral}) \citep[see, e.g.,][eq.(4.24), p.130]{Cont}
\begin{equation*}
\begin{cases}
A_t&=0\\
\gamma_t&=\int_{|x|<1}x\,\nu_t(\mathrm{d}x)+\eta_t\sigma_t\sqrt{t}\\
\nu_t(x)&=\frac{C(\alpha,k_t,\sigma_t,\eta_t)}{|x|^{\alpha+1/2}}\,e^{-\frac{\eta_t\,x}{\sigma_t\sqrt{t}}}\,K_{\alpha+1/2}\left(|x|\frac{\sqrt{\eta_t^2+2\frac{1-\alpha}{k_t}}}{\sigma_t\sqrt{t}}\right)
\end{cases}
\end{equation*}
where
\begin{equation*}
C(\alpha,k_t,\sigma_t,\eta_t)=\frac{2}{\Gamma(1-\alpha)\sqrt{2\pi}\sigma_t\sqrt{t}}\left(\frac{1-\alpha}{k_t}\right)^{1-\alpha}\left(\sqrt{\eta_t^2\sigma_t^2t+2\sigma_t^2t\frac{1-\alpha}{k_t}}\right)^{\alpha+1/2}\,\, ,
\end{equation*}
and
\begin{equation*}
K_\theta(z)=\frac{e^{-z}}{\Gamma(\theta+1/2)}\sqrt{\frac{\pi}{2z}}\int_0^{+\infty}e^{-s}s^{\theta-1/2}\left(\frac{s}{2z}+1\right)^{\theta-1/2}\mathrm{d}s
\end{equation*}
is the modified Bessel function of the second kind \citep[see, e.g.,][Ch.9, p.376]{abramowitz1948handbook}. For $t=0$, we set $A_0=0$, $\gamma_0=0$ and $\nu_0=0$.

\bigskip

First, we prove that $\nu_t(x)$ is a non-decreasing function wrt $t$. The expression for $\nu_t(x)$ is the following 

\begin{align*}
\nu_t(x)=&\frac{(1-\alpha)^{1-\alpha}}{\Gamma(1-\alpha)\Gamma(1+\alpha)|x|^{\alpha+1}}\exp\left\{-\frac{1}{\sigma_t\sqrt{t}}\left(x\eta_t+|x|\sqrt{\eta_t^2+2\frac{1-\alpha}{k_t}}\right)\right\}\\
&\int_0^{+\infty}e^{-s}s^{\alpha}\left[\frac{\sigma_t^2t}{k_t^{(1-\alpha)/\alpha}}\left(\frac{s}{2|x|}+\frac{1}{\sigma_t\sqrt{t}}\sqrt{\eta_t^2+2\frac{1-\alpha}{k_t}}\right)\right]^{\alpha}\mathrm{d}s\,\,.
\end{align*}

\smallskip

The function $\nu_t(x)$ is non-decreasing wrt $t$ $\forall x$ if the following two terms are not decreasing
\begin{align}
&\label{eq:terms_exp}\exp\left\{-\frac{1}{\sigma_t\sqrt{t}}\left(x\eta_t+|x|\sqrt{\eta_t^2+2\frac{1-\alpha}{k_t}}\right)\right\}\,\,,\\
&\label{eq:terms_int}\frac{\sigma_t^2t}{k_t^{(1-\alpha)/\alpha}}\left(\frac{s}{2|x|}+\frac{1}{\sigma_t\sqrt{t}}\sqrt{\eta_t^2+2\frac{1-\alpha}{k_t}}\right)\,\, .
\end{align}

The condition on $p^\pm_t$ implies that (\ref{eq:terms_exp}) is not decreasing wrt $t$ $\forall x\in\R$. The condition on $p^\pm_t$ implies also that
\begin{equation*}
\frac{1}{\sigma_t\sqrt{t}}\sqrt{\eta_t^2+2\frac{1-\alpha}{k_t}}
\end{equation*}
is non-increasing wrt $t$, thus the factor $\sigma_t^2\,t\,k_t^{(\alpha-1)/\alpha}$ is not decreasing and that also (\ref{eq:terms_int}) is not decreasing.

\bigskip

To prove that
\begin{equation}
\label{eq:lim_nu}
\lim_{t\rightarrow0^+}\nu_t(x)=0\quad\forall x\neq0\,\,,
\end{equation}
we discuss three cases (whose union covers all possible cases)
\begin{align*}
\mathrm{a)}\quad&\lim_{t\rightarrow0^+}\eta_t^2k_t=0\,\, ,\\
\mathrm{b)}\quad&\lim_{t\rightarrow0^+}k_t=0\,\, ,\\
\mathrm{c)}\quad&\lim_{t\rightarrow0^+}\eta_t^2k_t\neq0\quad\mathrm{and}\quad\lim_{t\rightarrow0^+}k_t\neq0\,\, ,
\end{align*}
we show that (\ref{eq:lim_nu}) holds on all the three cases. \\
In cases a) and b), it's straightforward to prove that the term in (\ref{eq:terms_exp}) goes to zero faster than (\ref{eq:terms_int}) diverges. In the last case c), thanks to hypothesis 2, the term in (\ref{eq:terms_int}) goes to zero, independently on the other term, since the term (\ref{eq:terms_exp}) is an exponential with negative exponent $\forall\,x\in\R\setminus\{0\}$ and $\forall\,t>0$, thus this term cannot diverge.
\smallskip
Thus, the following holds
\begin{equation*}
\lim_{t\rightarrow0^+}\nu_t(x)=0\quad\forall x\neq0\,\,.
\end{equation*}
Moreover,
\begin{equation*}
\lim_{t\rightarrow0^+}\gamma_t=0\,\, \textcolor{black}{,}
\end{equation*}
due to dominated convergence Theorem and hypothesis 2.

\bigskip

We can now check whether the triplet satisfies the conditions in \citet[Thm.9.8, p.52]{Sato}.
\begin{enumerate}
\item The triplet has no diffusion term.
\item $\nu_t$ is non-decreasing wrt $t$.
\item $\nu_t(B)$ and $\gamma_t$ are continuous, where $B\in \mathbb{B}(\R^+)$ and $B\subset\{x:|x|>\epsilon>0\}$. The continuity for $t>0$ is obvious, since it is a natural consequence of the composition of continuous functions. We have to prove that the limits of $\nu_t(B)$ and $\gamma_t$ are $0$. We have already proven that $\nu_t(x)$ is non decreasing in $t$ and that $\lim_{t\to0^+} \nu_t(x) = 0,\,\forall x\neq 0$. The convergence of $\nu_t(B)$ to $0$ is due to the dominated convergence theorem.
\end{enumerate}
\end{proof}

Before proving that the proposed process is additive and martingale, 
let us mention an important property of the model, stated in the following Lemma.

\smallskip
\setcounter{lemma}{0}
{
\renewcommand{\thelemma}{B.\arabic{lemma}}
\begin{lemma} {\rm (Equivalence in law of the process $\{ f_t \}$)}\label{lem:f_equivalence}\\
The process $ \left\{f_t \right\}_{t\geq 0}$ with characteristic function \eqref{eq:ChFun} --for any given $t$-- is equivalent in law to
\be
\label{eq:EquivInLaw}
{\sigma }_t \, \sqrt{ t} \, z\,\,,
\en
with
\begin{equation}
\label{eq:z_rv}
z:= \eta \,  (1 - G) -    \sqrt{G} \, g \; ,    
\end{equation}
where $g$ is a standard normal rv and $G$ a positive rv with Laplace exponent \eqref{eq:laplaceG}. The characteristic function of $z$ is
\begin{equation*}
\phi^{(z)}(u)=\phi_t\left(\frac{u}{\sigma_t\,\sqrt{t}}\right)\,\,.
\end{equation*}
\end{lemma}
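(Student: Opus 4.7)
The plan is to compute directly the characteristic function of $\sigma_t\sqrt{t}\,z$ and check that it coincides with $\phi_t(u)$ in \eqref{eq:ChFun}; by the uniqueness of characteristic functions this yields equivalence in law for each fixed $t$.

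First, I will condition on $G$. Since $g$ is standard normal and independent of $G$, conditionally on $G$ the random variable $-\sqrt{G}\,g$ is Gaussian with mean $0$ and variance $G$, so $z\mid G \sim \mathcal{N}\bigl(\eta(1-G),\,G\bigr)$. Hence
\[
\E\!\left[e^{iuz}\mid G\right] = \exp\!\left\{iu\,\eta(1-G) - \tfrac{u^2}{2}G\right\}
= e^{iu\eta}\,\exp\!\left\{-G\!\left(iu\eta+\tfrac{u^2}{2}\right)\right\}.
\]
Taking expectation over $G$ and using that, by assumption, $\psi(\cdot;k,\alpha)$ is the Laplace exponent of $G$, i.e.\ $\E[e^{-vG}] = \exp\{\psi(v;k,\alpha)\}$, I obtain
\[
\phi^{(z)}(u) = \E[e^{iuz}]
= e^{iu\eta}\,\exp\!\left\{\psi\!\left(iu\eta+\tfrac{u^2}{2};\,k,\alpha\right)\right\}.
\]

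Second, I multiply the argument by $\sigma_t\sqrt{t}$ to get the characteristic function of $\sigma_t\sqrt{t}\,z$:
\[
\E\!\left[e^{iu\,\sigma_t\sqrt{t}\,z}\right]
= \phi^{(z)}\!\bigl(u\,\sigma_t\sqrt{t}\bigr)
= \exp\!\left\{\psi\!\left(iu\,\eta\,\sigma_t\sqrt{t}+\tfrac{u^2}{2}\sigma_t^2 t;\,k,\alpha\right)+iu\,\eta\,\sigma_t\sqrt{t}\right\},
\]
which is exactly $\phi_t(u)$ as defined in \eqref{eq:ChFun}. Equivalence in law at the fixed time $t$ follows. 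The stated identity $\phi^{(z)}(u)=\phi_t(u/(\sigma_t\sqrt{t}))$ is then immediate from the same computation, simply by replacing $u$ with $u/(\sigma_t\sqrt{t})$.

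\textbf{Main difficulty.} The only non-mechanical point is the identification of $\psi(\cdot;k,\alpha)$ in \eqref{eq:laplaceG} as the Laplace exponent of a bona fide positive random variable $G$; this is the standard fact that $G$ is (up to normalisation) a tempered stable subordinator at unit time (with the $\alpha=0$ case corresponding to a Gamma law), and I will invoke it as a known property of the tempered-stable family, consistently with the convention already adopted in Proposition \ref{prop:price}. Once this is in hand the proof is a one-line conditioning argument.
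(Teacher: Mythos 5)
Your proof is correct and follows essentially the same route as the paper: the paper likewise computes the characteristic function of $z$ by iterated expectations (conditioning on $G$) and identifies it with $\phi_t(u/(\sigma_t\sqrt{t}))$, which is exactly your calculation made explicit. The only content in the paper's proof that you omit is a remark (via Lukacs' theorem) that $\phi^{(z)}$ is analytic in the strip $(-p^-,p^+)$, but that is not part of the Lemma's statement and is only needed for later results.
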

}
\begin{proof}
The Lemma can be proven observing that $f_t$ and the rv ${\sigma }_t \, \sqrt{ t} \, z$ in \eqref{eq:EquivInLaw}
have the same characteristic function. The characteristic function of $z$
\begin{equation*}
	\ln \phi^{(z)}(u):= \ln \mathbb{E}\left[e^{ i \, u \, z}\right]=\psi\left(i \, u \, \eta+\frac{u^2 }{2} ; \; k, \; \alpha \right) \, + \,
 i\, u \, \eta\;\;,
\end{equation*}
is s.t. it can be computed using iterated expectations 
 \citep[see, e.g.,][Sec.3.2.2, p.77]{mcneil2015}

\end{proof}

\smallskip

{\it Remark}.
Thanks to Thm.3.1, p.12 of \cite{lukacs1972}, the characteristic function of $z$ is analytic in the horizontal strip $(-\pln,\prn)$ with
\begin{equation}
\label{eq:p_plus_minus}
p^\pm=\lr\eta+\sqrt{\eta^2+2\frac{1-\alpha}{k}}\,\,.
\end{equation}
The proof is analogous to the one in Proposition \ref{pr:Moment}.\\
Thus, the characteristic function of $f_t$ is analytic in the horizontal strip $(-\plt,\prt)$ with
\begin{equation}
p^\pm_t=\frac{1}{\sigma_t\sqrt{t}}\,p^\pm\,\,.
\end{equation}

\smallskip

\begin{proof} {\bf Proposition \ref{th:f_Additive}} 

First, we demonstrate additivity proving that the additive Bachelier satisfies Lemma \ref{lem:f_AdditiveGen}.
In this case, $\sigma_t$ is s.t $\sigma_t^2 t$ is increasing, it goes to zero for $t\rightarrow0^+$ and
\[
\eta_t = \eta, \, k_t = k \;\; .
\]
Condition $1$ of Lemma \ref{lem:f_AdditiveGen} is satisfied because both $p^\pm_t$ and $p^*_t$ are decreasing, while condition $2$ holds. 

Then, we demonstrate that the process is a martingale using Lemma \ref{lem:f_equivalence}
\end{proof}

\smallskip

\begin{proof}  {\bf  Proposition \ref{prop:Lewis} }

The proof reminds the one of \citet{lewis2001} for exponential L\'evy processes, adapted to an underlying described by \eqref{eq:ForwardDynamics}.

Let us define the Fourier transform and its inverse 
\[
\left\{
\begin{array}{rl}
FT: & \hat{\omega}(\xi) : = \displaystyle \int^{+\infty}_{- \infty}\omega(f_t) \, e^{-i \xi f_t}\, \mathrm{d} f_t \\[4mm]
IFT: & \omega(f_t) : =  \displaystyle  \frac{1}{2 \pi} \int^{+\infty}_{- \infty} \hat{\omega}(\xi) \, e^{i \xi f_t}\, \mathrm{d} \xi \,\,,
\end{array}
\right.
\]
the Parseval relation holds
\[
\int^{+\infty}_{- \infty}\mathcal{P}_t(f_t) \, \omega(f_t) \, \mathrm{d} f_t= \frac{1}{2 \pi} \int^{+\infty}_{- \infty} \phi_t(\xi) \, \hat{\omega}(\xi) \,  \mathrm{d} \xi\,\,,
\]
where $\mathcal{P}_t(f_t)$ is the pdf of $f_t$.

\smallskip

\textcolor{black}{Let us consider the call option price \eqref{eq:EU_generic} for
$C(x,t; {\bf p})$.
} Let $x \in \R$, the moneyness.
The two parts of the Fourier transform of the call option payoff $  \omega(f_t):= [f_t  -  x]^+$ are:
\[
\begin{array}{lcll}
 \displaystyle \int^{+\infty}_{- \infty} e^{-i \xi f_t} \,  \mathbbm{1}_{\{f_t > x\}}\, \mathrm{d} f_t  & = &   \displaystyle  \frac{e^{-i \xi x}}{i \, \xi} & \Im(\xi) < 0 \\[4mm]
 \displaystyle \int^{+\infty}_{- \infty} f_t \,  e^{-i \xi f_t} \,  \mathbbm{1}_{\{f_t > x\}} \, \mathrm{d} f_t & = &   \displaystyle \frac{e^{-i \xi x}}{- \xi^2} + x  \frac{e^{-i \xi x}}{i \, \xi} & \Im(\xi) < 0 \,\,.\\
\end{array}
\]

Finally, using the Parseval relation above,
\[
\E_0[f_t  -  x]^+ = \frac{1}{2 \pi} \int^{+\infty}_{- \infty} \phi_t(\xi) \,  \frac{e^{-i \xi x}}{ - \xi^2} \,\mathrm{d} \xi \qquad \Im(\xi) < 0
\]
where the integral path is considered slightly below the real axis.

Because the characteristic function $\phi_t(\xi) $  is  analytic for $\Im(\xi) \in (-\plt, \prt) $, applying the Chauchy theorem, we obtain 
\[
\E_0[f_t  -  x]^+ = \frac{\displaystyle e^{x \, a}}{2 \pi} \int^{+\infty}_{- \infty} \phi_t \left(\xi + {i} \,a  \right) \,  \frac{e^{-i \xi x}}{ \left(i \, \xi -  a \right)^2}\,\mathrm{d} \xi \,\,,
\]
where $a \in(-\plt, 0)$.
In the case we want to consider $a\in[0,\prt)$, we apply the residue Theorem, as in \cite{lee2004option}, Thm.5.1 ($a>0$) and Thm.5.2 ($a=0$, Cauchy principal value).\\
For $a\in(0,\prt)$, 
\[
\E_0[f_t  -  x]^+ = -x+\frac{\displaystyle e^{x \, a}}{2 \pi} \int^{+\infty}_{- \infty} \phi_t \left(\xi + {i} \,a  \right) \,  \frac{e^{-i \xi x}}{ \left(i \, \xi -  a \right)^2}\, \mathrm{d} \xi \,\,,
\]
and for $a = 0$ (where the integral is intended as its principal value)
\[
\E_0[f_t  -  x]^+ = -\frac{x}{2}+\frac{\displaystyle 1}{2 \pi} \int^{+\infty}_{- \infty}\left( -\phi_t \left(\xi\right) \,  \frac{e^{-i \xi x}}{ \xi^2}+\frac{1}{\xi^2}\right)\, \mathrm{d} \xi
\]
\end{proof}

\smallskip

\begin{proof} {\bf Proposition \ref{prop:price}} \\
The Proposition is a consequence of Lemma \ref{lem:f_equivalence} and a straightforward application of the iterated expectation.
A call option is
\[
C(x,t; {\bf p}) = B_0 \, \sigma_t \, \sqrt{ t}  \, \E  \left[  z -  y \right]^+ \;,
\]
thus, using the law of iterated expectations
\[
\E \left[ \, \bullet \, \right] =  \E [  \E [  \, \bullet \, | G ] ]
\]
we get the result

\end{proof}

\smallskip

\begin{proof} {\bf Proposition \ref{pr:Existence} }
We can write the right hand side of equation (\ref{eq:IV eq}) as
\be
\label{eq:BachelierFormula_2}
C_b\left(x, t; \mathcal{I}_t(x)\right) = 
B_0  \,   \sqrt{t} \, c_b\left(  \frac{x}{\sqrt{t}}, \mathcal{I}_t(x) \right)  \, \, ,
\en
where $c_b$ is the normalized Bachelier call price \eqref{eq:BasicFormula}.

\smallskip

The following  properties hold $\forall x \in \R$, for fixed maturity $t$:
\begin{enumerate}[label=(\roman*)]
\item Vega is positive ;
\item
$
 \E_0 \left[ F_t - K \right]^+ > [-x]^+ \; ;
$ 
\item
$
\displaystyle\lim_{\sigma \to+ \infty} c_{b}\left(\displaystyle\frac{x}{\sqrt{t}}, {\sigma}\right) =+ \infty   \; ;
$
\item
$
\displaystyle\lim_{\sigma \to 0^+} c_{b}\left(\displaystyle\frac{x}{\sqrt{t}}, \sigma\right) = \left[-\frac{x}{\sqrt{t}}\right]^+  \; .
$
\end{enumerate}
The first property is a consequence of \eqref{eq:Vega}, the second is due to the Jensen inequality 
\[
   \E_0 \left[ F_t - K \right]^+  >  \left[ \E_0 [F_t - K] \right]^+ =   [-x]^+
\]
and the other two come from direct computation.
The first property implies that $c_{b}\left(\frac{x}{\sqrt{t}}, \sigma\right) $ is strictly increasing in $\sigma$, while the last three that it always exists a solution for (\ref{eq:IV eq}), since the Bachelier call price $C_b$ is obtained multiplying the normalized Bachelier call price $c_b$ by a positive constant.
Thus,  $\forall x \in \R$ and $\forall t \in \R^+$, it exists a unique value for the implied volatility $\mathcal{I}_t(x)$, solution of (\ref{eq:IV eq})
\end{proof}

\smallskip
\begin{proof}{\bf Proposition \ref{prop:IyTimeIndep}}\\
Thanks to the homogeneity property \eqref{eq:hom} of the Bachelier formula applied to equations  \eqref{eq:BachelierFormula} and \eqref{eq:ModelPrice_Time}, 
the IV equation  \eqref{eq:IV eq} becomes the (time independent) equation \eqref{eq:MainIV}.
Thus, its solution ${I}(y)$ does not depend on time
\end{proof}
\smallskip
\begin{proof}{\bf Proposition \ref{pr:Regularity}}\\
We define the function $\mathcal{G}(y,I)$ as
\[
\mathcal{G}(y,I):=c_b(y,I)-\mathbb{E}\left[c_b \left(y+\eta (G-1),\sqrt{G} \right) \right]\;\;.
\]
The IV equation \eqref{eq:MainIV} is equivalent to 
\[
\mathcal{G}(y,I(y))=0\;\;.
\]
Thus, applying the implicit function Theorem \citep[see, e.g.,][Thm.9.18, p.196]{rudin1976principles},  the function $I(y)$ is well-defined and ${\cal C}^1(\mathbb{R})$ if
\begin{equation*}
    \begin{cases}
        &\mathcal{G}(y,I(y))=0\\[3mm]
        &\displaystyle \left. \frac{\partial}{\partial I}\mathcal{G}(y,I)\right|_{I=I(y)}\neq 0
    \end{cases}\quad\forall y\in\mathbb{R}\;\;.
\end{equation*}
The first condition comes from the definition of $I(y)$ (\textcolor{black}{that is well posed thanks to Proposition \ref{prop:IyTimeIndep}}).\\
For the second condition, we recall the expression of the (normalized) Bachelier formula \eqref{eq:BasicFormula}.
Thus,
\[
\frac{\partial}{\partial I}\mathcal{G}(y,I)=\frac{\partial}{\partial I}c_b(y,I)=\varphi\left(-\frac{y}{I}\right)>0\quad\forall y\in\mathbb{R}\;\;,
\]
where the last equality appears in Appendix \ref{app:Bachelier_formula}; this proves that $I(y)\in\mathcal{C}^1(\mathbb{R})$.\\
The derivative of $I(y)$ is given by
\[
I'(y)=-\displaystyle \left. \frac{\frac{\partial\mathcal{G}(y,I)}{\partial y}}{\frac{\partial\mathcal{G}(y,I)}{\partial I}}\right|_{I=I(y)}\;\;,
\]
in this case
\begin{equation}
I'(y)=\frac{\Phi\left(-\frac{y}{I(y)}\right)-\mathbb{E}\left[\Phi \left(-\frac{y+\eta(G-1)}{\sqrt{G}}\right)\right]}{\varphi\left(-\frac{y}{I(y)}\right)}\;\;,
\label{eq:I_first_der}    
\end{equation}
that is well defined since it's the composition of continuous functions of $y$.
With some computations, we can prove that

\begin{equation}
\label{eq:I_second}
    I''(y)= \displaystyle
     \frac{\mathbb{E}\left[\frac{1}{\sqrt{G}} \, \varphi\left(\frac{y+\eta(G-1)}{\sqrt{G}}\right)\right]}{\varphi\left(-\frac{y}{I(y)}\right)}
     -\frac{1}{I(y)}\left(1-\frac{y}{I(y)}I'(y)\right)^2\,\,,
\end{equation}
which is well defined $\forall y\in\mathbb{R},\,\eta\in\mathbb{R},\,k\in\mathbb{R}^+$
\end{proof}
{\it Remark}.
It is possible to compute the derivative of the implied vol $I(y)$ at any order $n$ (where it is well-defined) via the following equation
\[
\frac{d^n}{(dy)^n} c_b \left(y, I(y) \right) = 
(-1)^n  \mathbb{E}\left[\frac{1}{\sqrt{G}} \, He\left(\frac{y+\eta(G-1)}{\sqrt{G}}, n-2\right) \, \varphi\left(\frac{y+\eta(G-1)}{\sqrt{G}}\right)\right]\,\,,
\]
where $He\left( \bullet , n\right)$ are the Hermite polynomials of order $n$ \citep[see, e.g.,][Ch.22]{abramowitz1948handbook}.

\smallskip

\begin{proof} {\bf Proposition \ref{pr:Symmetry}}

For a given $y$, we impose the IV equation \eqref{eq:MainIV}, i.e.
\[
{c_b\left( y, {I}(y;\eta)\right)}  =  \E \left[ c_b\left(y+  \eta \, (G -1) , \sqrt{G}\right) \right]  =: {\cal A}\,\,,
\]
where ${\cal A}$ is a positive value.
We change $y \to (-y)$ and $\eta \to (-\eta)$ in both terms of the IV equation \eqref{eq:MainIV}.
The right hand side becomes
\[
 \E \left[ c_b\left( -(y+  \eta \, (G -1)) , \sqrt{G}\right) \right] =
 y +  \E \left[ c_b\left( y+  \eta \, (G -1) , \sqrt{G}\right) \right] = y + {\cal A}\,\,,
\] 
while the left hand side is
\[
 {c_b\left( -y, {I}(-y;-\eta)\right)}  = y +  {c_b\left( y, {I}(-y;-\eta)\right)}  \;\; .
\]
Thus, we have that
\[
\left\{
\begin{array} {lcl}
{c_b\left( y, {I}(y;\eta)\right)}  &=&  {\cal A} \\
{c_b\left( y, {I}(-y;-\eta)\right)}  & = & {\cal A}  \;\; .
\end{array}
\right.
\]
Because the value of \textcolor{black}{the} implied vol that solves both equations is unique, we have that
\[
I(-y;-\eta) = I(y;\eta)
\]
\end{proof}

\begin{proof} {\bf Corollary \ref{cor:Symmetry} }

The if condition is a consequence of Proposition \ref{pr:Symmetry}
\[
I(y; \eta=0) = I(- y; \eta=0) \; .
\]
Moreover, we can compute $I'_0$, the ATM derivative of $I(y)$ by \eqref{eq:I_first_der},
\begin{equation}
\label{eq:I0_prime} 
I'_0 = 
\displaystyle \left. \frac{\displaystyle \Phi\left( - \frac{y}{I(y)} \right) - \E \left[ \Phi\left( - \frac{y+\eta(G-1)}{ \sqrt{G}} \right) \right] }
{\displaystyle \varphi\left(-\frac{y}{I(y)}\right)} \right|_{y=0} 
 = -\sqrt{\frac{\pi}{2}}\, \E \left[ {\rm erf} \left( \frac{ {\eta}}{\sqrt{2}} \, \frac{  1 - G }{  \sqrt{G}} \right) \right]\,\,,
\end{equation}
that is an odd function of $\eta$, and it is equal to zero only if $\eta =0$. This proves the only if part
\end{proof}

\smallskip

\begin{proof} {\bf Proposition \ref{pr:ATMVolSkew} }

Thanks to Proposition \ref{pr:Regularity}, we can Taylor expand the implied volatility $I(y)$ around the ATM 
\[
{I}(y) = {I}_0 +  I'_0 \, y + \frac{1}{2} I''_0 \, y^2 + o(y^2)\,.
\]

$I_0$ can be obtained from \eqref{eq:MainIV}
observing that $c_b(0, I_0) = \displaystyle\frac{I_0}{\sqrt{2 \pi}}$.

\smallskip

$I'_0$ can be found in \eqref{eq:I0_prime}
\begin{equation*}
I'_0 = -\sqrt{\frac{\pi}{2}}\, \E \left[ {\rm erf} \left( \frac{ {\eta}}{\sqrt{2}} \, \frac{  1 - G }{  \sqrt{G}} \right) \right]\,\,,
\end{equation*}
that is antisymmetric in $\eta$. 
The value of $I''_0$ can be obtained directly 
from the proof of Proposition \ref{pr:Regularity} by (\ref{eq:I_second}) computed ATM
\begin{equation*}
{ I}''_0  =\displaystyle \left. \left(\displaystyle\frac{\mathbb{E}\left[\frac{1}{\sqrt{G}} \, \varphi\left(\frac{y+\eta(G-1)}{\sqrt{G}}\right)\right]}{\varphi\left(-\frac{y}{I(y)}\right)}
     -\frac{1}{I(y)}\left(1-\frac{y}{I(y)}I'(y)\right)^2 \right)\right|_{y=0}= \displaystyle \sqrt{2 \pi}\,\displaystyle 
 \E \left[ \frac{1}{  \sqrt{G}} \, \varphi \left(  {\eta} \, \frac{  1 - G }{  \sqrt{G}} \right) \right] - \frac{1}{I_0}
\end{equation*}
\end{proof}

\smallskip

\begin{proof} {\bf Proposition \ref{pr:Moment}}\\
Because ``the purely imaginary points on the boundary of the strip of
regularity (...) are singular points of" the characteristic function \citep[cf.][Th.3.1, p.12]{lukacs1972}, to select the boundaries of the analyticity strip,
we have to consider $u = i a$ with $a \in \R$, or equivalently for
\[
 - \, a \, \eta\,\sigma_t\sqrt{t} - \frac{ a^2}{2}\sigma_t^2\,t > - \frac{(1-\alpha)}{k}\quad\Leftrightarrow\quad a \in (-\plt, \prt)\,\,,
\]
with 
\[
p^\pm_t =  \frac{1}{\sigma_t\sqrt{t}}\left\{\lr  \eta + \sqrt{ {\eta}^2 + 2 \,  \frac{(1-\alpha)}{k} }\right\}=\frac{1}{\sigma_t\sqrt{t}}\,p^\pm \;\; .
\]
We apply Theorem 3.6 in \cite{smileWings} and obtain that
\[
\lim_{x\rightarrow\pm\infty}\frac{\mathcal{I}_t(x)^2}{|x/\sqrt{t}|}=\frac{1}{2\,p_t^{\lr}\sqrt{t}}\,\,,
\]
because the characteristic function of returns $\zeta_t:=\frac{f_t}{\sqrt{t}}$ is equal to $\phi_t\left(\frac{u}{\sqrt{t}}\right)$ and it is analytic in the horizontal strip $(-\plt\sqrt{t},\prt\sqrt{t})$.\\
Finally, we observe that
\[
\lim_{y\rightarrow\pm\infty}\frac{I(y)^2}{|y|}=\frac{1}{\sigma_t}\lim_{x\rightarrow\pm\infty}\frac{\mathcal{I}_t(x)^2}{|x/\sqrt{t}|}=\frac{1}{2\,p^\lr_t\sigma_t\,\sqrt{t}}=\frac{1}{2\,p^\lr}
\]
\end{proof}

\smallskip

\begin{proof}{\bf Proposition \ref{pr:altEqCalibration}}\\
It is a consequence of the homogeneity property \eqref{eq:hom} of the Bachelier formula \eqref{eq:BachelierFormula} applied to
the IV equation \eqref{eq:MainIV}
\end{proof}

\smallskip

\begin{proof}{\bf Lemma \ref{lem:sticky}}\\
The proposed model is naturally sticky delta, i.e.\ keeping constant the model parameters ${\bf p}$, 
the whole volatility surface as a function of the moneyness does not change
\end{proof}
\newpage
\bibliography{main}
\bibliographystyle{tandfx}

\section*{Notation and shorthands}


	\hspace{-0.6cm}
	\begin{tabular} {|c|l|}
		\toprule
		\textbf{Symbol}& \textbf{Description}\\ \bottomrule
		\textcolor{black}{$(\Omega,\mathcal{F},\mathbb{F},\mathbb{P})$} & \textcolor{black}{filtered probability space} \\
        $\E_s[\bullet]$ & $\E_s[\bullet | {\cal F}_s]$ where ${\cal F}_s$ is the \textcolor{black}{filtration at time $s$} \\
        \textcolor{black}{$(A_s, \nu_s, \gamma_s)$} & \textcolor{black}{Generating triplet characterizing the process $\{f_s\}_{s\geq0}$}\\
		$\mathbbm{1}_A$ & indicator function of $A$, $\mathbbm{1}_A=1$ if $A$ is true, $0$ otherwise\\
		$B_0$ & discount factor between the value date and option maturity $t$ \\
		$F_s$ & forward price with expiry in $t$ and valued at time $s \in [0,t]$ \\  
		$C_b \left(x, t; \sigma \right)$ & Bachelier call option price wrt moneyness $x$ and maturity $t$, with volatility $\sigma$ \\
		$c_b(y,\sigma)$ & normalized Bachelier call option price wrt moneyness $y$, with volatility $\sigma$ \\
		$C \left(x, t; \mathbf{p} \right)$ & additive Bachelier call price wrt moneyness $x$ and ttm $t$, function of parameters $\mathbf{p}$ \\
		$C^{mod} \left(x, t; \eta,k \right)$ & additive Bachelier call price wrt moneyness $x$ and ttm $t$, imposing $\sigma_t=\sigma_t^{ATM}/I_0$ \\
		$C^{mkt} \left(x, t \right)$ & market price of a call option wrt moneyness $x$ and maturity $t$ \\
		$C_s^{mkt} \left(x, t \right)$& market price of a call option wrt moneyness $x$ and maturity $t$, at value date s \\
		$RMSE_{t_0}(\eta^*, k^*)$ & RMSE between market and model prices at value date $t_0$, using parameters $\{\eta^*,k^*\}$\\
		${\sigma}^{ATM}_t$ & ATM vol term structure observed on market data\\
		${\sigma}_t \in \R^+ \setminus 0$ & vol term structure, a model parameter, a strictly positive continuous function\\
		${\eta} \in \R$ & skew parameter \\
		${k} \in \R^+$ & vol-of-vol parameter, a non-negative parameter\\
		$\alpha\in[0,1)$ & hyper-parameter of the model \\
		$\mathcal{I}_t\left(x\right)$ & implied volatility wrt the moneyness $x$ and the maturity $t$\\
            ${ I}\left(y\right)$ & implied volatility wrt the (normalized) moneyness degree $y$ \\
		${\Ib}\left(\chi\right)$ & implied volatility wrt $\chi$   \\
		$G$ & a positive rv with unitary mean and variance ${k}$, Laplace transform $\psi (u; k, \alpha)$ \\
		$\Im(\xi)$ & imaginary part of a complex number $\xi$\\
		$\phi_t\left(u\right)$ & characteristic function of the underlying process $\left\{f_t\right\}_{t\geq0}$\textcolor{black}{, defined in equation \eqref{eq:ChFun}} \\	
		${\varphi} ( \bullet ), {\Phi} ( \bullet )$ & pdf and cdf of a standard normal rv\\
		$\psi (u; k, \alpha)$ &  Laplace exponent of the positive rv $G$, defined in \eqref{eq:laplaceG}\\
		$t_0=0$ & value date \\
		$t$ & European option maturity (and time-to-maturity) \\
		$x$ & option moneyness, equal to $K - F_0$\\	
            	$y$ & $x/(\sigma_t \,\sqrt{ t})$: (normalized) moneyness degree \\	
		$\chi$ & $x /(\sigma^{ATM}_t \, \sqrt{t})$: a normalized  {\it moneyness}\\	
		$\zeta_t$ & $\frac{F_t-F_0}{\sqrt{t}}$: increment of forward price divided by the square root of the ttm $t$\\	
		$\mathcal{C}^n(\mathbb{R})$ & class of $n$-times differentiable functions, with $n$-th derivative continuous \\
		\bottomrule		
	\end{tabular}
	

\bigskip 

	\begin{tabular}{|c|l|}
		\toprule
		\textbf{Symbol}& \textbf{Description}\\ \bottomrule
		\textcolor{black}{a.s.} & \textcolor{black}{almost surely} \\
        ATM & At-The-Money (referred to options) \\
		bp & basis point (typical measurement unit for rates), equal to $0.01\%$ \\
		cdf & cumulative distribution function \\
		CME & Chicago Mercantile Exchange \\
		FT & Fourier transform \\
		ICE & Intercontinental Exchange \\
		iff & if and only if \\
		IFT & inverse Fourier transform \\
		ITM & In-The-Money (referred to options) \\
		IV & implied volatility \\
		mkt & market \\
            	RMSE & root mean square error \\
		NIG & Normal Inverse Gaussian \\
		OIS & Overnight Indexed Swap \\
		OTM & Out-The-Money (referred to options) \\		
		pdf & probability density function \\
		rv & random variable \\
		s.t. & such that \\
         		ttm & time-to-maturity \\
		VG & Variance Gamma \\
		wrt  & with respect to \\
		WTI & West Texas Intermediate oil market \\
		\bottomrule
	\end{tabular}

\end{document}